\definecolor{light-gray}{gray}{0.8}
\DeclareSymbolFont{rsfscript}{OMS}{rsfs}{m}{n}
\DeclareSymbolFontAlphabet{\mathrsfs}{rsfscript}
\DeclareMathOperator{\rt}{rt}
\DeclareMathOperator{\car}{car}
\theoremstyle{definition}
\newtheorem{problem}[theorem]{Problem}
\title{Primitive sets of nonnegative matrices and synchronizing automata}
\author[1]{Bal\'azs Gerencs\'er}
\author[2,3]{Vladimir V. Gusev}
\author[2]{Rapha\"el M. Jungers}
\affil[1]{Alfréd Rényi Institute of Mathematics, Hungarian Academy of Sciences\\
  Budapest, Hungary\\
  \texttt{gerencser.balazs@renyi.mta.hu}}
\affil[2]{ICTEAM Institute, Universit{\'e} catholique de Louvain\\
  Louvain-la-Neuve, Belgium\\
  \texttt{\{vladimir.gusev, raphael.jungers\}@uclouvain.be}
  }
\affil[3]{Institute of Mathematics and Computer Science, Ural Federal University\\
  Ekaterinburg, Russia\\
  }
\authorrunning{B.\,Gerencs\'er, V.\,V.\,Gusev, R.\,Jungers} 
\subjclass{F.1.1 Models of Computation, G.2.1 Combinatorics}
\keywords{Nonnegative matrices, primitive sets of matrices, the exponent of a matrix set, carefully synchronizing automata, the \v{C}ern\'{y} conjecture}
\begin{document}

\maketitle

\begin{abstract}

A set of nonnegative matrices $\mathcal{M}=\{M_1, M_2, \ldots, M_k\}$ is called \emph{primitive} if there exist indices $i_1, i_2, \ldots, i_m$ such that $M_{i_1} M_{i_2} \ldots M_{i_m}$ is positive (i.e. has all its entries $>0$). The length of the shortest such product is called the \emph{exponent} of $\mathcal{M}$.
The concept of primitive sets of matrices comes up in a number of problems within control theory, non-homogeneous Markov chains, automata theory etc. Recently, connections between synchronizing automata and primitive sets of matrices were established. In the present paper, we significantly strengthen these links by providing equivalence results, both in terms of combinatorial characterization, and computational aspects.

We study the maximal exponent among all primitive sets of $n \times n$ matrices, which we denote by $\exp(n)$. 
We prove that $\lim_{n\rightarrow\infty} \tfrac{\log \exp(n)}{n} = \tfrac{\log 3}{3}$, 
and moreover, we establish that this bound leads to a resolution of the \v{C}ern\'{y} problem for carefully synchronizing automata.
We also study the set of matrices with no zero rows and columns, denoted by $\mathscr{NZ}$, due to its intriguing connections to the \v{C}ern\'{y} conjecture and the recent generalization of Perron-Frobenius theory for this class.
We characterize computational complexity of different problems related to the exponent of $\mathscr{NZ}$ matrix sets,
and present a quadratic bound on the exponents of sets belonging to a special subclass. Namely, we show that the exponent of a set of matrices having total support is bounded by $2n^2 -5n +5$.
\end{abstract}

\section{Introduction}
A nonnegative matrix $M$ of size $n\times n$ is called \emph{primitive} if $M^{k}$ is positive (i.e. has all its entries larger than zero) for a positive integer $k$. This notion was introduced by Frobenius in 1912 during the development of so-called \emph{Perron-Frobenius theory}. This theory has found numerous applications since then: in the theory of Markov chains, economics, population modelling, centrality measures in networks, see~\cite[Chapter 8]{Meyer2000} for an introduction to the topic.
Motivated by various applications Protasov and Voynov introduced the following generalization of this notion to sets of matrices~\cite{PrVo2012}: 
a finite set of (entrywise) nonnegative matrices $\mathcal{M}=\{M_1, M_2, \ldots, M_k\}$ is called \emph{primitive} if $M_{i_1} M_{i_2} \ldots M_{i_m}$ is (entrywise) positive for some indices $i_1, i_2, \ldots, i_m \in [1, k]$. The length of the shortest such product is called the exponent $\exp{(\mathcal{M})}$ of $\mathcal{M}$. We will denote the value of the largest exponent among all sets of $n \times n$ matrices by $\exp(n)$. For example, the matrix set $\mathcal{M}$ in Fig~\ref{fig:primitive} is primitive, since the product $M_1M_2M_1M_2$ is entrywise positive, and its exponent is equal to $4$. Since the actual values of positive entries of matrices in $\mathcal{M}$ do not influence the exponent, in the rest of our paper we will implicitly assume that entries of all matrices are equal to 0 or 1. Moreover, we assume that the product $AB$ of two matrices of size $n \times n$ is also a $(0,1)$-matrix that is defined as follows\footnote{Formally speaking, we consider the matrices over the Boolean semiring.}: $AB[i,j] = 1$ if $\sum_{k}A[i,k]B[k,j]>0$, and $AB[i,j]=0$ otherwise.

Primitive sets of matrices received a lot of attention for different reasons. We refer the reader to the introduction in~\cite{BJO15} for the account of applications of primitive matrix sets to stochastic control theory and to the consensus problem. The connections to contractive matrix families and scrambling matrices are given in detail in~\cite[Section 5]{PrVo2012}. Primitive sets of matrices further arise in the study of time-inhomogeneous Markov chains~\cite{Hart02}, and are of importance in mathematical ecology~\cite{Log10}.
Furthermore, primitive sets of matrices are tightly related to boolean networks, which are widely used in biology to model gene regulatory networks. A special class of boolean networks -- disjunctive networks, can be seen as a set matrices over the Boolean semiring.
While researchers in theoretical biology are mostly interested in the attractors and the limit cycles for different types of update schedules, see for example~\cite{GoNo12}, we are mainly interested whether it is possible and how fast one can achieve the all-one state. 
The subfamily of nonnegative matrices that have no zero rows and columns, denoted by $\mathscr{NZ}$, will be of major interest to us for the following reasons. A matrix $M$ is called \emph{irreducible} if for every $i,j$ there exists a positive integer $m$ such that $M^{m}[i,j]>0$. A set of $k \geq 1$ matrices $\mathcal{M}$ is \emph{irreducible} if the matrix $\sum_{i=1}^{k} M_i$ is irreducible. As usual, we will denote by $e_i$ the $i$th vector of the \emph{canonical basis} in $\mathbb{R}^n$, the $i$th entry of $e_i$ is one, all the others are zeros. We say that a matrix $M$ acts as a permutation on a partition $V_1, V_2, \ldots, V_m$ of the vectors of the canonical basis if there exists a permutation $\sigma$ such that for all $i$, $V_i M$ belongs to the subspace spanned by $V_{\sigma(i)}$. A classical theorem of Perron-Frobenius theory states that an irreducible matrix $M$ is primitive if and only if there is no partition $V_1, \ldots, V_m$ of the canonical basis vectors for $m>1$ such that $M$ acts as a permutation on $V_1, \ldots, V_m$. Protasov and Voynov generalized this theorem to sets of matrices belonging to $\mathscr{NZ}$~\cite{PrVo2012}: an irreducible set of matrices belonging to $\mathscr{NZ}$ is primitive if and only if there is no partition $V_1, \ldots, V_m$ for $m>1$ such that every $M \in \mathcal{M}$ acts as a permutation on $V_1, \ldots, V_m$. Thus, the class of primitive matrices belonging to $\mathscr{NZ}$ can be viewed as the right class for Perron-Frobenius-type theory of matrix sets. This characterization also leads to an efficient algorithm that decides whether a set of matrices belonging to $\mathscr{NZ}$ is primitive.

\subsection{Synchronizing automata}
A deterministic finite state automaton $\mathrsfs{A}$ is a triple\footnote{The classical definition also involves an initial and a set of final states. Since they don't play any role in our considerations, we will omit them.} $\langle Q, \Sigma, \delta \rangle$, where $Q$ is a finite set of states, $\Sigma$ is a finite set of input symbols called \emph{the alphabet}, and $\delta$ is a transition function $\delta : Q \times \Sigma \rightarrow Q$. The image of a state $q$ under the action of a word $w$ is denoted by $q \cdot w$. An automaton $\mathrsfs{A}$ is called \emph{synchronizing} if there exist a word $w$ and a state $f$ such that for every state $q$ we have $q\cdot w = f$. Any such word is called a \emph{synchronizing} or \emph{reset} word. The length of the shortest such word is called \emph{the reset threshold} $\rt (\mathrsfs{A})$ of $\mathrsfs{A}$. Synchronizing automata naturally appear in different areas of research. For example, they were used to model sensorless parts orienting problems: given a part and a set of available actions that can change its spatial orientation, find a sequence of actions that would bring the part to a desired orientation independently of the initial position~\cite{Na1986}. Clearly, if we consider an automaton $\mathrsfs{A}$ with the set of spatial orientations as the set of states, and the available actions as letters, then the ``orienting sequence'' corresponds to a synchronizing word of $\mathrsfs{A}$. We refer the reader to~\cite{Volkov2008Survey} for the survey of main results and other applications. A recent account of applications of synchronizing automata in group theory can be found in~\cite{ArCaSt15}. Persisting interest of the research community to the topic is also driven by one of the most famous open problems in automata theory. Namely, \emph{the \v{C}ern\'{y} conjecture} states that the reset threshold of an $n$-state automaton is at most $(n-1)^2$~\cite{Cerny1964,CernyPirickaRosenauerova1971}. This bound is reached by the $n$-state \v{C}ern\'{y} automaton $\mathrsfs{C}_n$, see~\cite[p. 18]{Volkov2008Survey}, but despite intensive efforts of researchers, the best upper bound $\tfrac{n^3-n}{6}$ was obtained more than 30 years ago in~\cite{Pin1983OnTwoCombinatorialProblems, Fr1982} and independently in~\cite{KRS1987}.

The notion of a synchronizing automaton can be generalized in three different ways to nondeterministic automata~\cite{ImSt99}. We will focus our attention on the most relevant for us. An automaton $\mathrsfs{A}$ is a \emph{partial} automaton if the transition function $\delta$ is partial, i.e. there might be undefined transitions for some pairs of states and letters. A partial automaton is \emph{carefully synchronizing} if there exist a word $w$ and a state $f$ such that $q\cdot w$ is defined and equal to $f$ for every state $q$. Any such word is called a \emph{carefully synchronizing} word. The length of the shortest such word we will denote by $\car (\mathrsfs{A})$.
We will denote by $\car(n)$ the maximum of $\car(\mathrsfs{A})$ among all $n$-state partial automata. Essentially, carefully synchronizing automata model the problem of bringing a simple finite-state device to a known state with a single input sequence, while avoiding undefined transitions, which are undesirable or can break the device. In matrix terms, it amounts to consider a set of matrices with \emph{at most} one 1-entry per row, and to ask for a product with one (entrywise) positive column.

\subsection{Our contributions}
Our results can be informally arranged into three different groups. The contributions of the first group significantly improve the understanding of the relationships between primitive sets of matrices and synchronizing automata. The work within this framework started in~\cite{AGV2013}, where well-known examples of primitive matrices with large exponent were used to construct series of automata with relatively large reset thresholds, so-called ``slowly synchronizing automata''. In~\cite{BJO15} it was shown that a $f(n)$ bound on the reset threshold of $n$-state automata implies a $2f(n)+n-1$ bound on the exponent of $\mathscr{NZ}$ matrix sets. We significantly improve these results. We show that the growth rate of $\exp(n)$ is equal to $\Theta(\car(n))$. Thus, in a certain sense, the study of the exponents of sets of matrices is equivalent to the study of carefully synchronizing automata. We also formulate an analogous result for primitive $\mathscr{NZ}$ matrix sets. Namely, we introduce a special class of automata $\mathscr{C}$ such that the growth rate of the reset thresholds of automata in this class is equivalent to the growth rate of the exponents of $\mathscr{NZ}$ matrix sets. We propose and formalize a new open question of whether a quadratic bound on $\exp_{\mathscr{NZ}}(n)$ leads to a breakthrough on the \v{C}ern\'{y} conjecture.

The contributions of the second group are of combinatorial nature. Our main result states that $\lim_{n\rightarrow\infty} \tfrac{\log \exp(n)}{n} = \tfrac{\log 3}{3}$, and equivalently, $\lim_{n\rightarrow\infty} \tfrac{\log \car(n)}{n} = \tfrac{\log 3}{3}$. From the automata theory point of view our contribution can be seen as the resolution of the \v{C}ern\'{y}-like problem for the carefully synchronizing automata. From the point view of matrix theory, our result is a generalization of the classical theorem by Wielandt that the exponent of a single matrix is at most $(n-1)^2 + 1$, see for example~\cite[Corollary 8.5.9]{HoJo}. It also answers the question of establishing the growth rate of $\exp(n)$ posed in~\cite{BJO15}. Another contribution in this group is a partial result for $\mathscr{NZ}$ matrix sets. Recall that a matrix $M$ has \emph{total support} if every non-zero element $m_{i,j}$ of $M$ lies on a positive diagonal, i.e. for every $i,j \in [1,n]$ such that $m_{i,j}>0$ there exists a permutation $\sigma$ with the following properties: $\sigma(i)=j$ and for every $k \in [1,n]$ we have $m_{k, \sigma (k)} > 0$. We prove that the exponent of a set of matrices having total support is bounded by $2n^2 -5n +5$. In the proof we  utilize the well-known theorem by Kari that the reset threshold of an Eulerian automaton is bounded by $n^2 -3n +3$. This result suggests that the bounds for other classes of synchronizing automata might be used to obtain upper bounds on the exponent in the special classes of $\mathscr{NZ}$ matrix sets.

The contributions of the last group are related to the computational complexity of finding the exponent of an $\mathscr{NZ}$ matrix set. Given a set of two matrices belonging to $\mathscr{NZ}$ and possibly an integer $k$ encoded in binary, we establish the exact computational complexity 	of the following problems:
\begin{enumerate}
\item the problem of deciding whether $\exp(\mathcal{M}) \leq k$ is $NP$-complete;
\item the problem of deciding whether $\exp(\mathcal{M}) = k$ is $DP$-complete;
\item the problem of computing $\exp(\mathcal{M})$ is $FP^{NP[\log]}$-complete.
\end{enumerate} 
Furthermore, we show that unless $P=NP$, for every positive $\varepsilon$ there is no polynomial-time algorithm that computes the exponent of an $\mathscr{NZ}$ matrix set with the approximation ratio $n^{1-\varepsilon}$, even in the case of only three matrices in the set. These results are based on a single relatively simple reduction from automata with a sink state to sets of matrices belonging to $\mathscr{NZ}$.

The paper is organized as follows. Section 2 deals with the primitive sets of matrices in the general case. We show that $\exp(n)=\Theta(\car(n))$ and prove that $\lim_{n\rightarrow\infty} \tfrac{\log \exp(n)}{n} = \tfrac{\log 3}{3}$. Section 3 is devoted to the $\mathscr{NZ}$ matrix sets. In subsection 3.1 we introduce the class $\mathscr{C}$ such that $\exp_{\mathscr{NZ}}(n) = \Theta (\rt_{\mathscr{C}}(n))$. We also present a quadratic bound on the exponent of a set of matrices having total support. In subsection 3.2 we deal with the complexity issues related to the computation of the exponent of $\mathscr{NZ}$ matrix sets.

\section{The general case}

Recall that we denote the value of the largest exponent among all $n \times n$ matrices by $\exp(n)$. The growth rate of $\exp (n)$ is one of the most basic questions one can ask about the sets of primitive matrices. Furthermore, an upper bound on $\exp(n)$ gives a bound on the running time of the straightforward algorithm that decides whether a given set of matrices is primitive: we iterate through all the possible products of length up to $\exp(n)$ and check, whether  they contain a positive matrix. Since the problem is $NP$-hard~\cite[Theorem 6]{BJO15}, such a simple algorithm might be the best we can hope for. The best known bounds on $\exp (n)$ were presented in~\cite[Theorem 10]{BJO15}:
\begin{theorem}
\label{th:expOld}
If $\mathcal{M}$ consists of $m$ matrices of size $n \times n$ then $\exp(\mathcal{M}) \leq 2^{n^2}$. Moreover, if $m \geq 4$, then for all $\varepsilon > 0$ there exists a sequence of positive integers $n_1, n_2, \ldots , $ tending to infinity such that $((1-\varepsilon)e)^{\sqrt{n_k/2}} \leq \exp (n_k)$.
\end{theorem}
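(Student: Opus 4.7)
The proof naturally splits into a trivial upper bound and a substantive lower bound. For $\exp(\mathcal{M}) \leq 2^{n^2}$, I would use a pigeonhole argument in the Boolean matrix monoid: there are exactly $2^{n^2}$ distinct $(0,1)$-matrices of size $n \times n$. Let $w = M_{i_1} \cdots M_{i_\ell}$ be a shortest positive product of $\mathcal{M}$, and set $P_k = M_{i_1} \cdots M_{i_k}$ for $k = 0, \ldots, \ell - 1$ (with $P_0 = I$). If $P_j = P_k$ for some $j < k < \ell$, then
\[
M_{i_1} \cdots M_{i_j} M_{i_{k+1}} \cdots M_{i_\ell} = P_j \cdot (M_{i_{k+1}} \cdots M_{i_\ell}) = P_k \cdot (M_{i_{k+1}} \cdots M_{i_\ell}) = w
\]
would be a positive product strictly shorter than $w$, contradicting minimality. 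Hence the prefixes $P_0, \ldots, P_{\ell-1}$ are $\ell$ distinct Boolean matrices, and so $\ell \leq 2^{n^2}$.

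For the lower bound I would exhibit, for a sequence $n_k \to \infty$, an explicit primitive set of four $n_k \times n_k$ Boolean matrices with large exponent. A natural strategy, familiar from Wielandt-type lower bounds for a single primitive matrix, is to take several disjoint permutation cycles of coprime lengths $p_1 < \cdots < p_r$ on blocks of coordinates whose sizes add up to (at most) $n_k$, and to augment them with a pair of coupling matrices that transfer mass between cycles only once per full revolution. In such a construction any positive product must realize every combination of residues simultaneously, so its length is forced to be of the order of $\prod_i p_i$. Choosing the $p_i$ to be a suitable family of primes and invoking the prime-number theorem to balance $\sum_i p_i$ against $\log \prod_i p_i$ under the constraint that the cycles fit inside $n_k$ coordinates yields the claimed rate $((1-\varepsilon)\,e)^{\sqrt{n_k/2}}$ along the corresponding subsequence of $n_k$.

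The main obstacle sits entirely in the lower bound. Two independent guarantees are needed for the constructed set. First, that it is primitive, so that $\exp(\mathcal{M})$ is finite: this can be verified via the Protasov--Voynov criterion recalled in the introduction by checking that no nontrivial partition of the canonical basis is preserved by every matrix of the set. Second, and the real crux, that every product below the target length must leave at least one zero entry. The standard device is an invariance/coloring argument: label each basis vector by its position in its cycle (a residue modulo the corresponding prime) and track which residue tuples a short word can realize. Engineering the construction and the invariant so that the quantitative loss matches exactly the product of the primes, rather than a weaker polynomial or constant factor, is the main technical effort; optimizing the choice of primes under the sum constraint then gives the precise subsequence along which the advertised exponential rate holds.
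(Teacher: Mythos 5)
The paper never proves Theorem~\ref{th:expOld}: it is quoted verbatim from \cite[Theorem 10]{BJO15} as a background result, so there is no in-paper proof to compare against. Your upper bound argument is the standard one and certainly the intended one: the prefixes $P_0=I, P_1, \ldots, P_{\ell-1}$ of a shortest positive product of length $\ell$ must be pairwise distinct in the Boolean matrix monoid (a coincidence $P_j=P_k$ would let you excise the middle block and shorten the product), and there are only $2^{n^2}$ Boolean $n\times n$ matrices, hence $\ell\le 2^{n^2}$.

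Your lower bound, however, is a plan rather than a proof, and two of the omissions are substantive. First, ``disjoint permutation cycles'' yields permutation matrices, and any product of permutation matrices is again a permutation matrix --- one $1$ per row, hence never positive for $n\ge 2$. So all the fan-out needed for positivity must come from the unspecified coupling matrices; both primitivity of the set and the length lower bound hinge on their exact form, which you never commit to. Second, the claim that ``any positive product must realize every combination of residues simultaneously'' is exactly the lemma that has to be proved, not a consequence of the setup: the residue invariant you appeal to is only cleanly defined for products of permutation (or partial-function) letters, so the moment a coupling matrix appears it needs a different accounting. A quantitative sanity check also flags a mismatch: if you truly forced length $\prod_i p_i$ with primes satisfying $\sum_i p_i\le n$, the prime number theorem would give $e^{(1+o(1))\sqrt{n\log n}}$, strictly stronger than the stated $((1-\varepsilon)e)^{\sqrt{n_k/2}}$; the $n_k/2$ under the square root indicates the real construction spends roughly $2\sum_i p_i$ states, which your sketch does not account for.

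For context, the route the literature --- and this paper's own Theorem~\ref{th:expcar} and Corollary~\ref{corr:simpleBounds} --- actually takes for lower bounds on $\exp(n)$ is not a direct matrix construction but a reduction from carefully synchronizing partial automata: one builds a partial automaton out of coprime cycles whose shortest careful reset word is long (on the automaton side the residue argument is clean because every letter is a partial function), takes adjacency matrices of its letters, and adjoins a rank-one matrix $e_k^T e$ so that a synchronizing product followed by it is positive. That separation --- hard combinatorics on the automaton, a soft matrix reduction on top --- is what makes the estimate tractable; redone with Martyugin's construction it even gives the stronger $\exp(n)=\Omega(3^{n/3})$ of Corollary~\ref{corr:simpleBounds}, which supersedes the bound in the statement you set out to prove.
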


Recall that we denote the maximum of $\car(\mathrsfs{A})$ among all $n$-state automata $\mathrsfs{A}$ by $\car(n)$.
In the upcoming theorem we are going to show that $\exp (n)$ grows asymptotically as $\car(n)$. Thus, we can utilize the known bounds on $\car(n)$ to infer the bounds on $\exp (n)$. Furthermore, in the next subsection we will be able to significantly improve the known upper bound on $\car(n)$, and equivalently, on $\exp(n)$. Before stating the theorem we require one last definition. Given a (partial or complete) automaton $\mathrsfs{A}$, an \emph{adjacency matrix} $M_\ell$ of a letter $\ell$ is defined as follows: $M_{\ell}[i,j]=1$ if $i \cdot \ell = j$, and $M_\ell [i,j]=0$ otherwise. In Fig.~\ref{fig:primitive} the matrix $M_1$ is an adjacency matrix of the letter $m_1$.

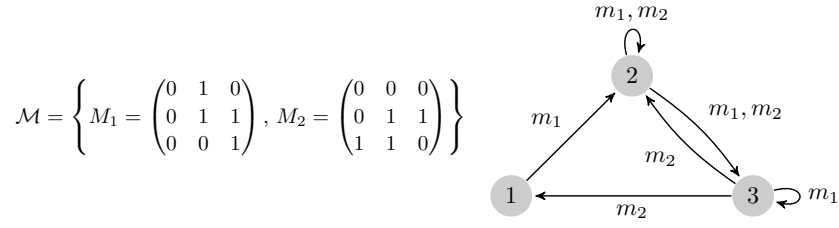
\begin{figure}
\begin{center}
\scalebox{0.8}{
$\mathcal{M} = \left\{ M_1=
\begin{pmatrix}
0 & 1 & 0\\
0 & 1 & 1\\
0 & 0 & 1
\end{pmatrix},\, M_2 = 
\begin{pmatrix}
0 & 0 & 0\\
0 & 1 & 1\\
1 & 1 & 0
\end{pmatrix}
\right\}
$
}
\scalebox{0.9}{
\begin{tikzpicture}[->,>=stealth',shorten >=1pt,auto,node distance=2.5cm,
                    semithick, baseline={([yshift=-1ex]current bounding box.center)}]
  \tikzstyle{every state}=[fill=light-gray,draw=none,text=black, scale=1,minimum size=0.5cm]

  \node[state] 		   (A) {$2$};
  \node[state]         (B) [below right of=A] {$3$};
  \node[state]         (C) [below left of=A] {$1$};

  \path (A) edge [bend left=10] node {$m_1,m_2$} (B)
			edge [loop above]  	node {$m_1,m_2$} (A)
        (B) edge [bend left=10]	node {$m_2$} (A)
            edge node {$m_2$} (C)
            edge [loop right]  	node {$m_1$} (B)
        (C) edge  node {$m_1$} (A);
\end{tikzpicture}}
\end{center}
\caption{The matrix set $\mathcal{M}$ and the corresponding non-deterministic automaton $\mathrsfs{M}$.}
\label{fig:primitive}
\end{figure}

\begin{theorem}
\label{th:expcar}
Let $\exp (n)$ be the maximum value of the exponent among all sets of $n \times n$ matrices. Let $\car(n)$ be the maximum value of $\car(\mathrsfs{A})$ among all $n$-state partial automata $\mathrsfs{A}$, then $\exp(n) = \Theta(\car(n))$.
\end{theorem}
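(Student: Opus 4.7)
The plan is to establish two inequalities, $\exp(n) \geq \car(n) + 1$ and $\exp(n) \leq O(\car(n))$.

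For the \textbf{lower bound}, I would begin with an $n$-state partial automaton $\mathrsfs{A}$ achieving $\car(\mathrsfs{A}) = \car(n)$ and carefully synchronizing to a state $f$. Form the matrix set $\mathcal{M}^{\star} = \{M_\ell : \ell \in \Sigma\} \cup \{S\}$, where each $M_\ell$ is the adjacency matrix of letter $\ell$, and $S = e_f \cdot \mathbf{1}^{T}$ is the $n \times n$ rank-one matrix whose only non-zero row is the $f$-th (filled with ones). If $w$ is any careful sync word of $\mathrsfs{A}$ targeting $f$, then $M_w = \mathbf{1}\cdot e_f^{T}$, so $M_w \cdot S = \mathbf{1}\cdot (e_f^{T} e_f)\cdot \mathbf{1}^{T} = \mathbf{1}\mathbf{1}^{T}$ is the all-ones matrix; hence $\mathcal{M}^{\star}$ is primitive with exponent at most $\car(n)+1$. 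For the matching lower bound on $\exp(\mathcal{M}^{\star})$, observe that any product over $\{M_\ell\}$ alone has at most one $1$ per row (inherited from the partial deterministic transitions and preserved under Boolean matrix multiplication), so it cannot be all-ones. Hence any all-ones product $\pi$ in $\mathcal{M}^{\star}$ must contain $S$; writing $\pi = \alpha S \beta$ with $\beta \in \{M_\ell\}^{\ast}$, the rank-one factorization gives $\pi = (\alpha e_f)(\mathbf{1}^{T}\beta)$, which is all-ones only if $\alpha e_f = \mathbf{1}$. This condition means $\alpha$ carefully synchronizes all states to $f$, so $|\alpha| \geq \car(n)$ and $|\pi| \geq \car(n)+1$.

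For the \textbf{upper bound}, I would need to associate to each primitive $n\times n$ matrix set $\mathcal{M}$ of exponent $\exp(\mathcal{M})$ a partial automaton on $O(n)$ states whose careful synchronization length is $\Omega(\exp(\mathcal{M}))$. The idea exploits that primitivity of $\mathcal{M}$ is equivalent to: in the underlying NFA $\mathrsfs{M}$, some word $w$ takes every state to the full set $[1,n]$. Fixing a target state $f$, column $f$ of the shortest primitive product $M_w$ is all-ones, so every state reaches $f$ under $w$ — a condition directly mirrored by a careful synchronization in a suitable deterministic restriction of $\mathrsfs{M}$. One then forms a partial DFA $\mathrsfs{A}^{\star}$ on the same $n$ states (or $n+O(1)$ with auxiliary ``scatter'' states handling the remaining columns), whose transitions are selected from $\mathcal{M}$'s non-deterministic transitions so as to preserve the length of the shortest positive product.

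The \textbf{main obstacle} is the upper bound: determinization typically makes synchronization \emph{easier}, not harder, so one must verify that the particular deterministic restriction used does not shortcut the primitivity condition. The accounting should show that any shorter careful sync word in $\mathrsfs{A}^{\star}$ would reconstruct a shorter positive product in $\mathcal{M}$, contradicting minimality of $\exp(\mathcal{M})$. Combined with an additive $O(n)$ overhead for the scatter phase that promotes a single positive column to a fully positive product, this yields $\exp(n) \leq O(\car(n))$ and completes the $\Theta$ claim.
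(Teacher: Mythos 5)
Your lower bound is essentially the paper's construction, presented a little more economically: the paper adjoins all $n$ rank-one matrices $\mathcal{E} = \{e_k^T e : k \in [1,n]\}$ where a single one (your $S$) already suffices. However, the factorization $\pi = \alpha S \beta$ with $\beta$ chosen $S$-free leaves a gap. You then claim $\alpha e_f = \mathbf{1}$ forces $\alpha$ to be a carefully synchronizing word, but nothing prevents $\alpha$ from containing further copies of $S$, in which case it is not a word over the automaton's alphabet and $|\alpha| \geq \car(n)$ does not immediately follow. Split instead at the \emph{leftmost} occurrence of $S$, so the prefix $\alpha$ is $S$-free: since $\alpha S$ has zero row $i$ whenever $\alpha[i,f]=0$, positivity of $\pi$ forces column $f$ of $\alpha$ to be all ones, and now $\alpha$ really does encode a careful synchronizing word. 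This is exactly the paper's argument, with $W = UEV$ and $U$ chosen $\mathcal{E}$-free.

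The upper bound is where the real work lives, and your sketch leaves the decisive step unproved. You correctly flag that one must verify the deterministic restriction $\mathrsfs{A}^\star$ is carefully synchronizing at all, but "a condition directly mirrored by a careful synchronization" is precisely the assertion that needs proof: two NFA paths reaching the target state $f$ under the same word $w$ may pass through the same state at the same position and then diverge, and such a family of paths cannot be realized by any single deterministic partial word. The paper resolves this with a path-merging argument: it picks paths $\pi_1,\ldots,\pi_n$ in $\mathrsfs{M}$ labelled by $w$, all ending at $t$, and splices them so that once two paths meet they share a common continuation; only then can the $n$ paths be implemented by a single word over the letters of the partial automaton, establishing that $\mathrsfs{A}$ is carefully synchronizing. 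Absent this argument, $\car(\mathrsfs{A}^\star)$ might be undefined and your chain of inequalities has no starting point. Separately, the "scatter phase" accounting of $O(n)$ is incorrect: converting a product $P$ with one positive column into a fully positive product requires a second product $R$ with a positive row, which needs a second careful-synchronization argument applied to $\mathcal{M}^T$ and so costs up to $\car(n)$, plus a connecting factor $Q$ of length at most $n-1$ by strong connectivity; the paper's bound is $\exp(\mathcal{M}) \leq 2\car(n) + n - 1$. The asymptotics $\exp(n) = \Theta(\car(n))$ survive either way, but the bookkeeping you propose does not.
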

\begin{proof}
The proof of the first part of the theorem is inspired by~\cite[Theorem 16]{BJO15}. 
While, in~\cite{BJO15}, the result was restricted to $\mathscr{NZ}$ matrices, we extend it here to all primitive sets of matrices.  
Furthermore, we make it deterministic, which will be crucial for Theorem~\ref{th:nzrcclass}.
Let us consider an arbitrary primitive set of matrices $\mathcal{M} = \{ M_1, \ldots, M_k \}$. We are going to show now that $\exp(\mathcal{M}) \leq 2\car(n) + n-1$, which implies $\exp(n)=O(\car(n))$. 
We will achieve this by presenting products $P,Q,R$ of matrices in $\mathcal{M}$ with the following properties: 
\begin{enumerate}
\item the $i$th column of $P$ is positive for some $i$ and the length of $P$ is at most $\car(n)$;
\item the $j$th row of $R$ is positive for some $j$ and the length of $R$ is at most $\car(n)$;
\item $Q[i,j]>0$ and the length of $Q$ is at most $n-1$.
\end{enumerate}
These properties clearly imply that $PQR$ is positive and the length of $PQR$ is at most $2\car(n) + n-1$. Thus, $\exp(\mathcal{M}) \leq 2\car(n) + n-1$.

We will construct the product $P$ by utilizing a partial automaton $\mathrsfs{A}$ defined as follows: a partial function $\varphi: [1,n] \rightarrow [1,n]$ is a letter of $\mathrsfs{A}$ if and only if there is a matrix $M \in \mathcal{M}$ with the properties: for all $i$, if $\varphi(i)$ is defined, $M[i, \varphi(i)]>0$, otherwise the $i$th row has no positive entries. First, we are going to show that $\mathrsfs{A}$ is carefully synchronizing, then we will use the shortest carefully synchronizing word of $\mathrsfs{A}$ to obtain the matrix product $P$.

We construct a carefully synchronizing word of $\mathrsfs{A}$ with the help of an auxiliary non-deterministic automaton $\mathrsfs{M}$ defined in the following manner:  the set of states of $\mathrsfs{M}$ is equal to $[1,n]$; for each matrix $M_i \in \mathcal{M}$ we add a letter $m_i$ such that $M_i$ is the adjacency matrix of $m_i$, see Fig.~\ref{fig:primitive}. It is straightforward to verify that for every $s,t,p_1, \ldots, p_\ell$ we have $M_{p_1}M_{p_2}\ldots M_{p_\ell}[s,t] >0$ if and only if there is a path from $s$ to $t$ in $\mathrsfs{M}$ labelled by the word $m_{p_1}m_{p_2}\ldots m_{p_\ell}$. Since $\mathcal{M}$ is primitive, there exists a positive product of matrices in $\mathcal{M}$. Therefore, there exists a word $w$ such that for every pair of states of $\mathrsfs{M}$ there is a path between them labelled by $w$. 

It remains to show that the word $w$ can be transformed to a carefully synchronizing word of $\mathrsfs{A}$. Let us fix a state $t \in [1,n]$. There are paths $\pi_1, \ldots, \pi_n$ in $\mathrsfs{M}$ labelled by $w$ and for every $s$ the path $\pi_s$ goes from the state $s$ to the state $t$. Furthermore, we can impose an additional property on these paths. Namely, if at a step $h$ paths $\pi_x$ and $\pi_y$ are in the same state, then their continuations coincide. Indeed, let $\pi_x = \pi'_x v \pi''_x$ and $\pi_y = \pi'_y v \pi''_y$. Then we can substitute the path $\pi_y$ with the path $\pi'_y v \pi''_x$, which still goes from $y$ to $t$ and it is labelled by $w$.
Observe now, that the paths $\pi_1, \ldots, \pi_n$ can be easily treated as paths leading to the state $t$ in the partial automaton $\mathrsfs{A}$: by construction for each letter $m$ of $\mathrsfs{M}$ and states $v_{i_1}, v_{i_2}, \ldots v_{i_p}$ with a property $v_{i_x} \in \delta_{\mathrsfs{M}}(i_x, m)$ for $x \in [1,p]$, there exists a letter $\ell$ of $\mathrsfs{A}$ such that $\delta_{\mathrsfs{A}}(i_x, \ell) = v_{i_x}$ for each $x \in [1,p]$; due to this fact and the unique continuation property of the paths, we conclude that there exists a word $w'$ over the alphabet of $\mathrsfs{A}$ that labels the paths from every state to the state $t$ in $\mathrsfs{A}$. Thus, $\mathrsfs{A}$ is carefully synchronizing.

Let $w=w_1w_2 \ldots w_h$ be the shortest carefully synchronizing word of $\mathrsfs{A}$. It is easy to see that a product $W_1W_2 \ldots W_h$ contains a column of ones, where $W_x$ is the adjacency matrix of $w_x$ for $x \in [1,h]$. Since for every $x\in [1,h]$ there is matrix $A_x \in \mathcal{M}$ such that $W_x \leq A_x$ we obtain a product $P=A_1A_2 \ldots A_f$ with the properties: $P$ has a column of ones and its length is bounded by $\car (n)$.

The product $R$ is constructed in the same manner by applying the reasoning of the previous paragraphs to a matrix set $\mathcal{M}^T = \{ M^T \mid M \in \mathcal{M}\}$. The resulting product $R^T$ has a column of ones and the length at most $\car (n)$. The existence of the product $Q$ easily follows from the fact that $\mathrsfs{M}$ is strongly connected (otherwise the set of matrices $\mathcal{M}$ is not primitive). Thus, for every pair of states $i,j$ there exists a path of length at most $n-1$ that bring $i$ to $j$.

Now, given a carefully synchronizing $n$-state automaton $\mathrsfs{A}$ with the reset threshold equal to $\car(n)$ we will construct a primitive set of matrices $\mathcal{M}$ such that $\car(\mathrsfs{A}) \leq \exp(\mathcal{M})$. It will imply $\exp(n) = \Omega (\car (n))$.
Let $e$ be a row vector of 1's, and $e_k$ be a row vector with the only non-zero entry equal to 1 at position $k$. Let $\mathcal{E} = \{e_k^T e \mid k \in [1,n]\}$. The set of matrices $\mathcal{M}$ is defined as a union $\mathcal{M'}\cup\mathcal{E}$, where $\mathcal{M'}$ is a set of the adjacency matrices of letters of the partial automaton $\mathrsfs{A}$. Since $\mathrsfs{A}$ is carefully synchronizing, there is a product $P$ of matrices in $\mathcal{M'}$ such that the $i$th column is positive for some $i$. If we multiply $P$ by the matrix $e_k^T e \in \mathcal{E}$ on the right, we obtain a positive matrix product. Thus, $\mathcal{M}$ is primitive.

It remains to show that $\car(\mathrsfs{A}) \leq \exp(\mathcal{M})$.
Let $W$ be the shortest positive product of matrices in $\mathcal{M}$. Note, that $W$ contains at least one matrix from $\mathcal{E}$, since every product of matrices in $\mathcal{M'}$ contains at most one 1 in each row. Let $W = UEV$, where the product $U$ doesn't contain matrices from $\mathcal{E}$ and $E \in \mathcal{E}$. Observe that $U$ contains a positive column. Otherwise, $W$ will have a zero row due to the presence of a zero row in $UE$. Therefore, the length of the product $U$ is at least $\car(\mathrsfs{A})$ and we obtain the desired inequality.
\end{proof}

\begin{corollary}
\label{corr:simpleBounds}
The growth rate of $\exp(n)$ is $O(n^2 4^{\frac{n}{3}})$ and $\Omega(3^{\frac{n}{3}})$.
\end{corollary}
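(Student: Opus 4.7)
The plan is to deduce both bounds directly from Theorem~\ref{th:expcar}, which already gives us $\exp(n) = \Theta(\car(n))$. So the entire task reduces to importing the best currently known bounds on $\car(n)$ from the literature on carefully synchronizing partial automata, and substituting them through the $\Theta$ identification.

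For the upper bound, I would invoke the known upper bound $\car(n) = O(n^2 \cdot 4^{n/3})$ on the length of the shortest carefully synchronizing word of an $n$-state partial automaton. The natural way to obtain such a bound goes through the power-set automaton: a carefully synchronizing word is exactly a path from the full subset $Q$ to a singleton that only uses, at each step, a letter defined on every element of the current subset. Counting such subsets carefully, or reusing Martyugin-style estimates, yields the stated $O(n^2 \cdot 4^{n/3})$ bound. Plugging this into Theorem~\ref{th:expcar} gives $\exp(n) = O(n^2 \cdot 4^{n/3})$.

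For the lower bound, I would cite an explicit family of $n$-state partial automata whose shortest carefully synchronizing words have length $\Omega(3^{n/3})$. Such families are built by taking roughly $n/3$ disjoint copies of a small "gadget" of size three whose carefully synchronizing behaviour forces the shortest synchronizing word to grow multiplicatively across the copies, producing growth of order $3^{n/3}$. Once this is available, a second application of Theorem~\ref{th:expcar} yields $\exp(n) = \Omega(3^{n/3})$.

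There is no real obstacle here beyond checking that the two external bounds on $\car(n)$ can be quoted in the precise forms stated; the argument itself is a one-line substitution into Theorem~\ref{th:expcar}. Note that the upcoming subsection of the paper will tighten both sides to the common rate $\exp(n) = 3^{n/3 + o(n)}$, so this corollary should be viewed as the immediate, pre-improvement consequence of the new equivalence $\exp(n) = \Theta(\car(n))$.
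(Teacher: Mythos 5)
Your proposal matches the paper's proof exactly: both bounds are obtained by quoting the known estimates on $\car(n)$ from the literature (the upper bound $O(n^2 4^{n/3})$ due to Gazdag et al.\ and the lower bound $\Omega(3^{n/3})$ due to Martyugin) and substituting them through Theorem~\ref{th:expcar}. One small correction of attribution: the upper bound comes from Gazdag et al.~\cite{Ivan2009}, not from Martyugin, whose contribution~\cite{Mart2010} is precisely the lower-bound family of gadgets you describe.
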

\begin{proof}
The first part of the claim follows from the result of Zs. Gazdag et al.~\cite[Theorem 3]{Ivan2009}: 
$\car(n) = O(n^2 4^{\frac{n}{3}})$. Thus, $\exp(n) = \Theta(\car(n)) = O(n^2 4^{\frac{n}{3}})$.
The second part follows from the result of Martyugin~\cite{Mart2010}. He constructed a series of carefully synchronizing automata with the length of the shortest carefully synchronizing word equal to $\Omega(3^{\frac{n}{3}})$. Thus, $\exp(n) = \Omega(3^{\frac{n}{3}})$.
\end{proof}

\subsection{Improving the upper bound on the exponent}
\label{sec:imprupper}

The goal of this section is to significantly improve the bound on $\exp(n)$ and, equivalently, on $\car(n)$. 
We will present a new upper bound on the length of the shortest carefully synchronizing word by modifying constructions from~\cite{Ivan2009}.

Recall that a \emph{partition} of a set $Q$ is a collection $\{Q_1, Q_2, \ldots, Q_k\}$ of pairwise disjoint non-empty sets whose union is equal to $Q$. Given a partition $\mathscr{P} =\{Q_1, Q_2, \ldots, Q_k\}$ of $Q$, a set $S \subseteq Q$ is called a {\it transversal} of $Q$ with respect to the partition $\mathscr{P}$ if for each $Q_i \in \mathscr{P}$ there is a unique $s \in S$ such that $s \in Q_i$.
A set $S \subseteq Q$ is a {\it partial transversal} with respect to $\mathscr{P}$ if for each $Q_i \in \mathscr{P}$ there is at most one $s \in S$ such that $s \in Q_i$.

\begin{example}
For a partition $\{\{1,2\}, \{3,4\}, \{5\} \}$ of $\{1,2,3,4,5\}$ the sets $\{1,3,5\}$ and $\{1,4,5\}$ are transversals and $\{1,3\}$ and $\{5\}$ are partial transversals. The set $\{1,2\}$ is neither transversal, nor partial transversal.
\end{example}
Let $Q$ be an $n$-element set and $\mathscr{P}$ be an arbitrary partition. We will denote by $\mathscr{T}(\mathscr{P})$ the number of different transversals with respect to $\mathscr{P}$ and by $\mathscr{T}^\ell (\mathscr{P})$ the number of different partial transversals of size $\ell$.
Let $\mathscr{T}_k(n)$ be the largest value of $\mathscr{T}(\mathscr{P})$ among all partitions $\mathscr{P}$ of $Q$ into $k$ parts. Similarly, let $\mathscr{T}^{\ell}_k(n)$ be the largest value of $\mathscr{T}^\ell(\mathscr{P})$ among all partitions $\mathscr{P}$ of $Q$ into $k$ parts. If the value of $n$ is clear from the context, then we will often write $\mathscr{T}_k$ and $\mathscr{T}^{\ell}_k$ to simplify notation.
We will make use of the following bounds on $\mathscr{T}_k(n)$ and $\mathscr{T}^{\ell}_k(n)$:

\begin{lemma}
\label{lemma:trans}
\begin{enumerate}
\item $\mathscr{T}_k(n) \leq 2^{n-k}$ for $k \in [1,n]$.
\item $\mathscr{T}_k(n) \leq 2^{3k-n}3^{n-2k}$ for $\frac{n}{3} \leq k \leq \frac{n}{2}$.
\item $\mathscr{T}_k(n) \leq 3^{\frac{n}{3}}$ for $k \leq \frac{n}{3}$.
\item $\mathscr{T}_k^{k-j}(n) \leq {n\choose j} \mathscr{T}_k (n)$ for $j \in [0,k-1]$.
\end{enumerate}
\end{lemma}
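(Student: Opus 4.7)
The plan is to translate each of (1)--(4) into a question about $s_1,\ldots,s_k$, the block sizes of $\mathscr{P}$. Since a transversal picks one element from each block and a partial transversal of size $\ell$ picks a subset $R \subseteq [k]$ of size $\ell$ together with one element from each block indexed by $R$, we have the identities
\[
\mathscr{T}(\mathscr{P})=\prod_{i=1}^{k} s_i, \qquad \mathscr{T}^{\ell}(\mathscr{P})=\sum_{R\subseteq[k],\,|R|=\ell}\;\prod_{i\in R} s_i,
\]
subject to $s_i\ge 1$ and $\sum_{i=1}^k s_i=n$. Parts (1) and (3) then follow from elementary optimization. For (1), multiplying the inequality $s\le 2^{s-1}$ (valid for every integer $s\ge 1$) over $i$ gives $\prod_i s_i \le 2^{\sum_i(s_i-1)}=2^{n-k}$. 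For (3), I would invoke the classical fact that $\prod_i s_i\le 3^{n/3}$ for every positive-integer composition of $n$: local swaps show that an optimum has all parts in $\{2,3,4\}$, from which a case check on $n \bmod 3$ yields the bound. Restricting to $k\le n/3$ parts can only decrease this maximum.

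For (2), the main step, I would show that in the range $n/3\le k\le n/2$ (equivalently $2k\le n\le 3k$) every maximizer has all parts in $\{2,3\}$, using two exchange moves that keep $k$ fixed. Move~(a) replaces $(1,s)$ by $(2,s-1)$ for $s\ge 3$, multiplying the product by $2(s-1)/s\ge 4/3$; move~(b) replaces $(s,2)$ by $(s-1,3)$ for $s\ge 4$, multiplying the product by $3(s-1)/(2s)\ge 9/8$. In the regime $2k\le n\le 3k$, the presence of a part $s_i=1$ forces the existence of some $s_j\ge 3$ (else $n\le 2k-1$), so (a) applies; once all $s_i\ge 2$, the presence of a part $s_i\ge 4$ forces the existence of some $s_j=2$ (else all $s_i\ge 3$ and so $n\ge 3k$, which combined with $k\ge n/3$ forces $k=n/3$ and all $s_i=3$, excluding $s_i\ge 4$), so (b) applies. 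Iterating, we arrive at a $\{2,3\}$-valued partition with $a$ twos and $b$ threes satisfying $a+b=k$ and $2a+3b=n$, whence $a=3k-n$, $b=n-2k$, and $\prod s_i=2^{3k-n}3^{n-2k}$. The boundary cases $k=n/3$ (all threes) and $k=n/2$ (all twos) are consistent with the formula.

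For (4), I would write
\[
\mathscr{T}^{k-j}(\mathscr{P})=\sum_{|R|=k-j}\prod_{i\in R} s_i \;\le\; \binom{k}{k-j}\prod_{i=1}^k s_i \;=\; \binom{k}{j}\,\mathscr{T}(\mathscr{P}),
\]
using $\prod_{i\in R}s_i\le\prod_{i=1}^k s_i$ (since $s_i\ge 1$), and then $\binom{k}{j}\le \binom{n}{j}$, valid since $j\le k-1 < k \le n$. The main obstacle across the lemma is the careful case analysis in part (2); the other parts are essentially immediate once the identities above are in place.
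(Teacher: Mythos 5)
Your proof is correct and reproduces the two structural identities $\mathscr{T}(\mathscr{P})=\prod_i s_i$ and $\mathscr{T}^{\ell}(\mathscr{P})=\sum_{|R|=\ell}\prod_{i\in R}s_i$ that the paper's argument also rests on, so the two proofs are close in spirit, but they diverge in the details of (1), (2) and (3). For (2) the paper reaches the conclusion that a maximizer has all parts in $\{2,3\}$ via a single exchange move: whenever $d_j-d_i\ge 2$, shift one element from the $j$th block to the $i$th, which multiplies the product by $(d_i+1)(d_j-1)/(d_id_j)>1$; this immediately forces $|d_i-d_j|\le 1$ for all $i,j$, and then $2\le n/k\le 3$ pins the parts to $\{2,3\}$. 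Your two tailored moves (a) and (b) are correct but do more work than is necessary. For (4) your computation is essentially verbatim the paper's. For (1) the paper just cites Proposition~5 of Gazdag et al., whereas your $s\le 2^{s-1}$ observation is a self-contained one-line replacement, arguably an improvement in exposition. For (3) the approaches genuinely differ: you invoke the classical discrete fact that $\prod_i s_i\le 3^{n/3}$ over all positive-integer compositions of $n$ and then restrict to $k\le n/3$ parts, while the paper applies AM--GM to get $\prod d_i\le (n/k)^k$ and then a short calculus argument ($\tfrac{\partial}{\partial k}(n/k)^k=(\ln(n/k)-1)(n/k)^k>0$ for $k\le n/3$) to show the bound is maximized at $k=n/3$. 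Both routes are standard; the AM--GM route sidesteps the mod-$3$ case analysis, while your route avoids calculus.
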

\begin{proof}
\begin{enumerate}
\item It is the statement of Proposition 5 in~\cite{Ivan2009}.
\item Let $\mathscr{P}$ be a partition of $Q$ into $k$ parts such that $\mathscr{T} (\mathscr{P}) = \mathscr{T}_k(n)$, where $\tfrac{n}{3} \leq k \leq \tfrac{n}{2}$. If $d_i$ is the size of the $i$th part of $\mathscr{P}$, then it is easy to see that $\mathscr{T}_k(n) = \prod_{i=1}^{k}d_i$. Observe that for any $i,j$ we have $d_j - d_i < 2$. Otherwise, by moving an element from the $j$th part to the $i$th part of the partition $\mathscr{P}$, we will increase the number of transversals: $(d_i + 1)(d_j - 1) = d_id_j + (d_j - d_i -1) > d_id_j$. Therefore, for a given range of values $k$, every $d_i$ is equal to $2$ or $3$. Let $x$ be the number of $d_i$'s equal to $2$, then $k-x$ is the number of $d_i$'s equal to $3$. Since $2x + 3(k - x) = n$, we derive that $x = 3k-n$, and the desired bound follows.
\item Let $\mathscr{P}$ be a partition of $Q$ into $k$ parts such that $\mathscr{T} (\mathscr{P}) = \mathscr{T}_k(n)$, where $ k \leq \tfrac{n}{3}$. If $d_i$ is the size of the $i$th part of $\mathscr{P}$, then by the inequality of arithmetic and geometric means we have $$\mathscr{T}_k(n) = \prod_{i=1}^{k}d_i \leq \left(\frac{\sum_{i=1}^{k} d_i}{k}\right)^k=\left(\frac{n}{k}\right)^k.$$ Let us bound the right hand side. Note that $\tfrac{\partial}{\partial k}(\tfrac{n}{k})^k = (\ln(\tfrac{n}{k}) - 1)(\tfrac{n}{k})^k$. For $k \leq \tfrac{n}{3}$, we have $\ln(\tfrac{n}{k}) - 1 > 0$. Therefore, the largest value of the function $(\tfrac{n}{k})^k$ is achieved at $k=\tfrac{n}{3}$.
Thus, $\mathscr{T}_k(n) \leq (\tfrac{n}{k})^k \leq 3^{\frac{n}{3}}$.
\item Let $\mathscr{P}$ be a partition of $Q$ into $k$ parts such that $\mathscr{T}^{k-j} (\mathscr{P}) = \mathscr{T}^{k-j}_k(n)$ and let $d_i$ be the size of the $i$th part of $\mathscr{P}$.

$$\mathscr{T}_k^{k-j}(n) = \sum_{|I|=k-j} \prod_{i \in I} d_i \leq {k \choose k-j} \prod_{i=1}^{k}d_i = {k \choose j} \prod_{i=1}^{k}d_i \leq {n \choose j} \mathscr{T}_k(n).$$
\end{enumerate}
\end{proof}

\begin{theorem}
\label{th:newBound}
Let $\exp (n)$ be the maximum value of the exponent among all sets of $n \times n$ matrices. Let $\car(n)$ be the maximum value of $\car(\mathrsfs{A})$ among all $n$-state partial automata $\mathrsfs{A}$, then $\lim_{n\rightarrow\infty} \tfrac{\log \exp(n)}{n} = \tfrac{\log 3}{3}$, and equivalently $\lim_{n\rightarrow\infty} \tfrac{\log \car(n)}{n} = \tfrac{\log 3}{3}$.
\end{theorem}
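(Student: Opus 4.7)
The plan is as follows. First, Theorem~\ref{th:expcar} gives $\exp(n)=\Theta(\car(n))$, so $\log\exp(n)=\log\car(n)+O(\log n)$ and the two stated limits coincide. I would therefore establish the statement for $\car(n)$ alone. The lower bound $\liminf_{n\to\infty}\tfrac{\log\car(n)}{n}\geq\tfrac{\log 3}{3}$ is already furnished by the Martyugin construction invoked in Corollary~\ref{corr:simpleBounds}, so no additional argument is needed on this side.

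The substance of the proof is therefore the matching upper bound $\car(n)=O(p(n)\cdot 3^{n/3})$ for some polynomial $p$. For a given $n$-state partial automaton $\mathrsfs{A}=\langle Q,\Sigma,\delta\rangle$, I would bound $\car(\mathrsfs{A})$ by the number $N$ of subsets of $Q$ reachable from $Q$ along words all of whose transitions are defined along the way. A standard breadth-first argument on this reachability graph gives $\car(\mathrsfs{A})\leq N-1$, so it suffices to prove $N=O(p(n)\cdot 3^{n/3})$.

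To bound $N$, I would adapt the counting framework of Gazdag et al.~\cite{Ivan2009}. There each reachable subset is associated with a partition of $Q$ arising from the kernel of the word that reaches it, and the subset is characterised as a (partial) transversal of that partition. Their $O(n^2 4^{n/3})$ bound comes from applying the universal estimate $\mathscr{T}_k(n)\leq 2^{n-k}$, which is tight only near $k\approx 2n/3$ but wasteful for small $k$. My refinement is to keep the same framework but split the count according to the partition size $k$ and invoke the sharper bounds of Lemma~\ref{lemma:trans}: for $k>n/2$ use part~(1) (giving $\mathscr{T}_k(n)\leq 2^{n-k}\leq 2^{n/2}=o(3^{n/3})$, since $2^{1/2}<3^{1/3}$); for $n/3\leq k\leq n/2$ use part~(2) (maximised at $k=n/3$ with value $3^{n/3}$); and for $k<n/3$ use part~(3) directly, again yielding $3^{n/3}$. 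Part~(4) converts these into the required bounds on partial transversals. Summing the polynomially many contributions gives $N=O(p(n)\cdot 3^{n/3})$.

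The principal obstacle will be the structural correspondence between reachable subsets and partial transversals of admissible partitions. Concretely, one has to attach a canonical partition $\mathscr{P}_w$ to each word $w$ defined on $Q$ (essentially the kernel of $w$ as a partial function), argue that $Q\cdot w$ is a transversal of $\mathscr{P}_w$, and justify that the counting $N\leq\sum_{k,\ell}c_{k,\ell}\mathscr{T}_k^\ell(n)$ holds with polynomially bounded multiplicities $c_{k,\ell}$. This inductive analysis along the word length is where the main work lies; once it is in place, the bound follows cleanly from the splitting above and Lemma~\ref{lemma:trans}.
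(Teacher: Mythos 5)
Your reduction to $\car(n)$ via Theorem~\ref{th:expcar}, and your invocation of Martyugin's $\Omega(3^{n/3})$ lower bound, are exactly what the paper does and are fine. The problem is the upper bound, where your plan diverges from the paper and cannot be made to work.

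Your strategy is to bound $\car(\mathrsfs{A})$ by $N-1$, where $N$ is the number of subsets of $Q$ reachable from $Q$, and then to show $N=O(p(n)\cdot 3^{n/3})$ by summing partial-transversal counts over partitions. The BFS inequality $\car(\mathrsfs{A})\leq N-1$ is correct but the subsequent claim on $N$ is false: already for complete deterministic automata whose letters generate the full transformation monoid, every nonempty subset of $Q$ is reachable, so $N=2^n-1$, yet such automata are (carefully) synchronizing. The reason your counting cannot be repaired is structural: while each reachable set $S=Q\cdot w$ is indeed a transversal of the kernel partition $\ker(w)$, the relevant partition changes with $w$, and there are superpolynomially many partitions. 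The quantities $\mathscr{T}_k(n)$, $\mathscr{T}_k^\ell(n)$ in Lemma~\ref{lemma:trans} bound (partial) transversals of a \emph{single fixed} partition, not the union of transversals over all partitions of a given size, so there is no way to produce polynomially bounded multiplicities $c_{k,\ell}$ in your proposed decomposition $N\leq\sum_{k,\ell}c_{k,\ell}\mathscr{T}_k^\ell(n)$.

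The paper avoids this by not attempting to bound the size of the reachability graph at all. Instead it iteratively constructs a carefully synchronizing word $u_1$ out of nested blocks $u_{k-\ell}=u_k\,t_k^k\,u_k\,t_k^{k-1}\,u_k\cdots t_k^{k-\ell+1}\,u_k$, where the repeated occurrences of $u_k$ serve as ``resets'' forcing every intermediate image set $S_i$ to be a partial transversal of the \emph{one} fixed partition induced by $u_k$. That is what licenses the use of $\mathscr{T}_k$ and $\mathscr{T}_k^{k-s+1}$ to bound $|t_k^{k-s+1}|$. You also misidentify the source of the $4^{n/3}$ in Gazdag et al.: it does not come from the looseness of $\mathscr{T}_k(n)\leq 2^{n-k}$ at small $k$, but from the factor-of-two blowup in the $\ell=1$ recursion $|u_{k-1}|=2|u_k|+|t_k^k|$ iterated over $\Theta(n)$ steps. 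Sharpening $\mathscr{T}_k(n)$ to $3^{n/3}$ for $k\leq n/3$, as you propose, does nothing to tame this doubling. The paper's actual improvement is to take the batch size $\ell$ large (but fixed) so that the recursion contributes only $(\ell+1)^{n/(3\ell)}\to 1$ as $\ell\to\infty$, after which the $\mathscr{T}_k(n)\leq 3^{n/3}$ estimates, together with part~(4) of Lemma~\ref{lemma:trans}, give the $(3+\varepsilon)^{n/3}$ bound. Your proposal omits the parameter $\ell$ entirely, which is the crux of the argument.
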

\begin{proof}
We will show that $\car(n)$ is at most $(3+ \varepsilon)^{n/3}$ for any $\varepsilon >0$ once $n>n(\varepsilon)$ for some threshold $n(\varepsilon)$. Since $\car(n)$ is $\Omega(3^{\frac{n}{3}})$ by~\cite{Mart2010}, the statement $\lim_{n\rightarrow\infty} \tfrac{\log \car(n)}{n} = \tfrac{\log 3}{3}$ will clearly follow. Due to Theorem~\ref{th:expcar} we will have the same statement for $\exp(n)$.

Let $\mathrsfs{A}$ be a carefully synchronizing $n$-state partial automaton with the set of states $Q$.
We will construct a carefully synchronizing word $u_1$ of $\mathrsfs{A}$ via the following iterative procedure:
\begin{enumerate}[(a)]
\item Let $u_{n-1}$ be a letter that is defined on every state $q \in Q$ and satisfies $|Q \cdot u_{n-1}| < |Q|$, where $|\cdot|$ denotes the cardinality of a set. Since $\mathrsfs{A}$ is carefully synchronizing, there exists at least one such letter.
\item Choose a positive integer $\ell < k$. Let $u_{k-\ell}$ be a word of the form 
$$ u_{k-\ell} = u_k t^k_{k} u_k t^{k-1}_k u_k \ldots t^{k - \ell+1}_k u_k, $$
where the words $t_k^{k-s+1}$ are defined iteratively for $s\in [1,\ell]$: $t_k^{k-s+1}$ is the shortest word such that $u_{k}t_k^k \ldots t_k^{k-s+1}u_k$ is defined on every state and $|Q \cdot u_{k}t_k^k \ldots t_k^{k-s+1}u_k| \leq k - s$. 
\end{enumerate}
Note, that the word $u_{k-\ell}$ is well-defined, since at every step the set of possible words for $t_k^{k-s+1}$ contains carefully synchronizing words of $\mathrsfs{A}$. Our procedure further ensures that $|Q \cdot u_k|\leq k$ for every $k$. Thus, $u_1$ is indeed a carefully synchronizing word. Our goal now is to bound the length of $u_1$.
The bound on $\car(n)$ presented in~\cite{Ivan2009} was obtained using the presented procedure with the parameter $\ell$ ultimately fixed to 1. By choosing for every $\varepsilon>0$ a sufficiently large $\ell$ satisfying certain conditions, we get a significant improvement. We proceed by bounding the length of intermediate words $u_k$. To simplify the presentation and without loss of generality, we will further assume that $n$ is divisible by $6\ell$.
\begin{enumerate}
\item $\frac{n}{2} \leq k \leq n-1$. For these values of $k$ we uniformly put $\ell=1$. The proof of this case is presented in~\cite[Proposition 7]{Ivan2009}, which we repeat here for convenience.
We are going to show that $|u_k| \leq (n - k) 2^{n-k-1}$ by induction. Note, that $|u_{n-1}| = 1$. Let $u_{k-1} = u_k t_k^k u_{k}$.
The word $u_k$ gives rise to a partition $\mathscr{P}$ of $Q$ into $k$ parts as follows: a pair of states $p,q$ belongs to the same part of $\mathscr{P}$ if $p\cdot u_k = q \cdot u_k$. Observe that if $t_k^k=x_1x_2\ldots x_m$, where $x_1,\ldots,x_m$ are letters, then the set $Q \cdot u_k x_1\ldots x_i$ is a transversal with respect to $\mathscr{P}$ for every $i \in [1,m-1]$. Indeed, if it is not the case for some $i'$, then the word $u_k x_1 \ldots x_{i'}u_k$ satisfy the conditions $(b)$ of our procedure and it is shorter than $u_{k-1}$, which is impossible. Therefore, the length of $t_k^k$ is bounded by $\mathscr{T}(\mathscr{P})$. By lemma~\ref{lemma:trans} we conclude $|t_k^k| \leq \mathscr{T}_k(n) \leq 2^{n-k}$. Therefore, $|u_{k-1}| \leq 2 |u_k| + |t_k^k| \leq 2 (n - k) 2^{n-k-1} + 2^{n-k} = (n - k + 1)2^{n-k}$, which completes the induction. Observe, that for $k = \tfrac{n}{2}$ we have $|u_{\frac{n}{2}}| \leq \tfrac{n}{2} 2^{\frac{n}{2}}$.

\item $\frac{n}{3} \leq k < \frac{n}{2}$. For each $\varepsilon>0$ we will choose the value of $\ell$ at the end of the proof, independent of $n$ and $k$. As before, the word $u_k$ gives rise to a partition $\mathscr{P}$ of $Q$ into $k$ parts as follows: a pair of states $p,q$ belongs to the same part of $\mathscr{P}$ if $p\cdot u_k = q \cdot u_k$. 
Let us fix $s \in [1,\ell]$ and let $t_k^{k-s+1} = x_1x_2\ldots x_m$, where $x_1,\ldots,x_m$ are letters. By construction, for every $i \in [1,m-1]$ the cardinality of the set $S_i = Q \cdot u_k t_k^k u_k \ldots t_k^{k-s} u_k x_1 \ldots x_i$ is equal to $k-s+1$. Furthermore, $S_i$ is a partial transversal with respect to $\mathscr{P}$. 
Indeed, if it is not the case, then $|S_{i}\cdot u_k| \leq k-s$ and the length of $t_k^{k-s+1}$ is not minimal. 
Therefore, the length of $t_k^{k-s+1}$ is bounded by the number of partial transversals $\mathscr{T}^{k-s+1}(\mathscr{P}) \leq \mathscr{T}_k^{k-s+1}(n)$. 
Using this bound for all $s \in [1,\ell]$ and part 4 of lemma~\ref{lemma:trans} we obtain
$$|u_{k-\ell}| \leq (\ell+1) |u_k| + \left({n\choose 0} + {n \choose 1} + \ldots + {n \choose \ell-1} \right)\mathscr{T}_k = (\ell+1) |u_k| +  p(n) \mathscr{T}_k,$$
for some polynomial $p(n)$ of degree $\ell-1$.
Note, that part 3 of lemma~\ref{lemma:trans} can be rewritten as $\mathscr{T}_{\frac{n}{2} - j} \leq 2^{\frac{n}{2}}\left( \tfrac{9}{8} \right)^j$ for $j \in [0,\tfrac{n}{6}]$. Applying this inequality and the inequality for $|u_{k-\ell}|$ $\frac{n}{6 \ell}$ times we derive:
\begin{align*}
|u_\frac{n}{3}| &\leq |u_\frac{n}{2}| (\ell + 1)^\frac{n}{6\ell} + \mathscr{T}_\frac{n}{2} p(n) (\ell+1)^{\frac{n}{6\ell}-1} + \mathscr{T}_{\frac{n}{2} - \ell} p(n) (\ell+1)^{\frac{n}{6\ell}-2} \\
 &\quad + \mathscr{T}_{\frac{n}{2} - 2\ell} p(n) (\ell+1)^{\frac{n}{6\ell}-3} + \ldots + \mathscr{T}_{\frac{n}{3} + 2\ell} p(n)(\ell + 1) + \mathscr{T}_{\frac{n}{3} + \ell} p(n)\\
&\leq \frac{n}{2} 2^\frac{n}{2} (\ell + 1)^\frac{n}{6\ell} + 2^\frac{n}{2} p(n) (\ell+1)^{\frac{n}{6\ell}-1}+ 2^\frac{n}{2} \left(\frac{9}{8}\right)^{\ell}  p(n) (\ell+1)^{\frac{n}{6\ell}-2} \\
&\quad + 2^\frac{n}{2} \biggl(\frac{9}{8}\biggr)^{2\ell} \!\! p(n) (\ell+1)^{\frac{n}{6\ell}-3} + \ldots + 2^\frac{n}{2} \biggl(\frac{9}{8}\biggr)^{\frac{n}{6} - 2\ell} \!\!\!\!  p(n)(\ell+1) + 2^\frac{n}{2} \biggl(\frac{9}{8}\biggr)^{\frac{n}{6} - \ell} \!\! p(n).\\
\end{align*}
By choosing $\ell$ large enough, such that it satisfies $\frac{(\frac{9}{8})^\ell}{\ell+1} > 1$, we ensure that 
every term, starting from the second one, is majorated by the last term. Thus, $$ |u_\frac{n}{3}| \leq \frac{n}{2} 2^\frac{n}{2} (\ell + 1)^\frac{n}{6\ell} + \frac{n}{6\ell} 2^\frac{n}{2} \left(\frac{9}{8}\right)^{\frac{n}{6} - \ell} p(n) \leq 2^\frac{n}{2} \left(\frac{9}{8}\right)^{\frac{n}{6}}  \left( \frac{n}{2} + \frac{n}{6\ell} p(n) \right)  \leq 3^{\frac{n}{3}} q(n),$$
for another polynomial $q(n)$.

\item $1 \leq k < \frac{n}{3}$. 
Since $\lim_{\ell\to\infty} (\ell+1)^{1/\ell} = 1$, for every $\varepsilon>0$ we can choose $\ell$ such that $(\ell+1)^{1/\ell} < 1 + \frac{\varepsilon}{3}$. In the same way as before, we derive
\begin{align*}
|u_1| &\leq |u_\frac{n}{3}| (\ell + 1)^\frac{n}{3\ell} + \mathscr{T}_\frac{n}{3} p(n) (\ell+1)^{\frac{n}{3\ell}-1} + \mathscr{T}_{\frac{n}{3} - \ell} p(n) (\ell+1)^{\frac{n}{3\ell}-2}\\
&\quad+ \ldots + \mathscr{T}_1 p(n) \\ 
&\leq 3^\frac{n}{3} q(n) (\ell + 1)^\frac{n}{3\ell} + 3^\frac{n}{3} p(n) (\ell+1)^{\frac{n}{3\ell}-1} + 3^\frac{n}{3} p(n) (\ell+1)^{\frac{n}{3\ell} - 2}\\
&\quad+ \ldots + 3^\frac{n}{3} p(n) \leq 3^\frac{n}{3} (\ell+1)^{\frac{n}{3\ell}} r(n) < 3^{\frac{n}{3}}\left(1 + \frac{\varepsilon}{3}\right)^{\frac{n}{3}} = 
(3+\varepsilon)^{\frac{n}{3}},
\end{align*}
where $r(n)$ is a polynomial. The last inequality holds for large enough $n$.
\end{enumerate}
\end{proof}

\section{Sets with no zero rows nor zero columns}

\subsection{Bounds on the exponent}
A quadratic lower bound on the exponents of sets of matrices belonging to $\mathscr{NZ}$ was obtained in~\cite[Corollary 20]{BJO15}.
A first cubic upper bound $\tfrac{n^3+n^2-4n+2}{2}$ was given in~\cite[Theorem 1]{Voy13}\footnote{At the discussion of the connections with the \v{C}ern\'{y} conjecture the author refers to a wrong bound, see~\cite{GoJuTr15} for a discussion.}. The proof relies on standard linear algebraic techniques. This bound was improved in~\cite[Corollary 18]{BJO15} to $\tfrac{n^3+2n-3}{3}$. The proof is based on the following fact:
a bound $f(n)$ for the reset thresholds of synchronizing automata implies a bound $O(f(n))$ for the exponents of $\mathscr{NZ}$ matrix sets~\cite[Theorem 17]{BJO15}. In this subsection we will extend this result and present a quadratic bound for a special class of $\mathscr{NZ}$ matrix sets.

We denote by $\exp_{\mathscr{NZ}}(n)$ the maximal exponent among all primitive matrix sets belonging to $\mathscr{NZ}$.
In order to state a theorem for $\exp_{\mathscr{NZ}}(n)$, analogous to theorem~\ref{th:expcar}, we will introduce a new class of automata $\mathscr{C}$ defined as follows. An automaton $\mathrsfs{A}$ with the set of states $[1,n]$ over an alphabet $\Sigma$ belongs to $\mathscr{C}$ if there exists a partition of $\Sigma$ into $\Sigma_1, \Sigma_2, \ldots, \Sigma_k$ such that for every $i \in [1,k]$ we have:
\begin{enumerate}
\item for each state $q$ there exists a state $p$ and a letter $\ell \in \Sigma_i$ such that $p \cdot \ell = q$;
\item for every choice of states $q_1, \ldots , q_n$ such that $j \cdot \ell_j = q_j$ for some $\ell_j \in \Sigma_i$, there exists a letter $\ell \in \Sigma_i$ with the property $j \cdot \ell = q_j$ for all $j \in [1,n]$.
\end{enumerate}
In other words, for each $i$, every state is reachable from somewhere by a letter in $\Sigma_i$, and given a list of transformations of states by letters in $\Sigma_i$, we can find a letter in $\Sigma_i$ that performs all the transformations at once. 
\begin{example}
The \v{C}ern\'{y} automaton $\mathrsfs{C}_n$ equipped with an identity letter $c$ belongs to $\mathscr{C}$. The required partition is $\Sigma_1 = \{a,c\}$ and $\Sigma_2 = \{b\}$. Clearly, the reset threshold is not changed with the addition of the letter $c$.
\end{example}
\begin{theorem}
\label{th:nzrcclass}
Let $\exp_{\mathscr{NZ}}(n)$  be the maximum value of the exponent among all sets of $\mathscr{NZ}$ matrices of size $n \times n$. Let $\rt_{\mathscr{C}}(n)$ be the largest reset threshold among $n$-state automata in $\mathscr{C}$, then $\exp_{\mathscr{NZ}}(n) = \Theta(\rt_{\mathscr{C}}(n))$.
\end{theorem}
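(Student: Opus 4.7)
The proof refines Theorem~\ref{th:expcar} to the $\mathscr{NZ}$ setting, and both directions proceed by dictionary-style translations between $\mathscr{NZ}$ matrix sets and automata in the class $\mathscr{C}$. The definition of $\mathscr{C}$ is tailored precisely to make this translation tight: condition (1) will correspond to the absence of zero columns, and condition (2) encodes the closure property of the selector functions coming from a single $\mathscr{NZ}$ matrix.

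For the upper bound $\exp_{\mathscr{NZ}}(n) = O(\rt_{\mathscr{C}}(n))$, the plan is to follow the first half of the proof of Theorem~\ref{th:expcar}, with the crucial improvement that the constructed automaton is now complete rather than partial. Starting from a primitive $\mathcal{M} = \{M_1,\ldots,M_k\} \subseteq \mathscr{NZ}$, I would build a deterministic automaton $\mathrsfs{A}$ on $[1,n]$ whose letters are, for each $M_i$, all functions $\varphi : [1,n] \to [1,n]$ with $M_i[j,\varphi(j)] = 1$ for every $j$, grouped as $\Sigma_i$. The absence of zero rows in $M_i$ makes every such $\varphi$ total, so $\mathrsfs{A}$ is complete. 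The partition $\Sigma_1,\ldots,\Sigma_k$ then verifies both conditions of $\mathscr{C}$: condition (1) holds because $M_i$ has no zero column, so every state is reachable by some selector; condition (2) holds because $\Sigma_i$ is by construction closed under pointwise combination of its own letters. The remainder of the argument follows Theorem~\ref{th:expcar} almost verbatim: the primitivity of $\mathcal{M}$, passed through the auxiliary non-deterministic automaton and the unique-continuation trick, yields a synchronizing word of $\mathrsfs{A}$ of length at most $\rt_{\mathscr{C}}(n)$, which translates to a matrix product $P$ with a positive column; the same construction applied to $\mathcal{M}^T$ yields a product $R$ with a positive row; and a path product $Q$ of length at most $n-1$ joins them, proving $\exp(\mathcal{M}) \leq 2\rt_{\mathscr{C}}(n) + n - 1$.

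For the lower bound $\exp_{\mathscr{NZ}}(n) = \Omega(\rt_{\mathscr{C}}(n))$, the construction is reversed. Given $\mathrsfs{A} \in \mathscr{C}$ with partition $\Sigma_1,\ldots,\Sigma_k$ and reset threshold $R$, define $M_i[p,q] = 1$ iff some $\ell \in \Sigma_i$ sends $p$ to $q$; totality of letters gives no zero row and condition (1) gives no zero column, so $M_i \in \mathscr{NZ}$. The key observation is that $(M_{i_1}\cdots M_{i_h})[p,q] > 0$ if and only if $\mathrsfs{A}$ has an actual word $\ell_1 \cdots \ell_h$ with $\ell_x \in \Sigma_{i_x}$ taking $p$ to $q$. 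Now let $W = M_{i_1}\cdots M_{i_h}$ be any positive product and pick any column $f$; for each state $p$ we obtain a per-$p$ word of type-sequence $i_1,\ldots,i_h$ taking $p$ to $f$ in $\mathrsfs{A}$. Applying the unique-continuation trick to unify trajectories that meet, and then, at each step $x$, invoking condition (2) to collapse the per-state letter choices into one global letter $\ell_x \in \Sigma_{i_x}$, produces a single word of length $h$ that is a reset word of $\mathrsfs{A}$. Hence $h \geq R$, and therefore $\exp(\mathcal{M}) \geq R$.

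The main obstacle is ensuring that $\mathcal{M}$ is genuinely primitive in the first place, since the previous argument only makes sense once $\exp(\mathcal{M})$ is finite: the reset word of $\mathrsfs{A}$ directly yields only a product with one positive column, and the Protasov-Voynov criterion additionally demands that there be no non-trivial partition of the basis on which every $M_i$ acts as a permutation. A clean workaround is to augment $\mathrsfs{A}$ with an identity letter placed in some $\Sigma_i$ (mirroring the \v{C}ern\'{y} example preceding the theorem), which preserves membership in $\mathscr{C}$, leaves $n$ and $\rt(\mathrsfs{A})$ unchanged, and turns the corresponding $M_i$ into an irreducible matrix with a loop at every state, hence primitive as a single matrix. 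Concatenating the positive-column product of length $R$ with a suitable power of this $M_i$, and symmetrically with a positive-row product obtained by the transposed construction, yields a fully positive product, certifying primitivity and closing the lower bound.
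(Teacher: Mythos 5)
Your two main constructions are essentially the paper's: for the upper bound you re-run the deterministic reduction of Theorem~\ref{th:expcar} and observe that no zero rows makes the resulting automata complete (hence carefully synchronizing $=$ synchronizing), that no zero columns gives condition~(1), and that the set of selector functions of a single matrix is closed under pointwise combination, which gives condition~(2); for the lower bound you set $M_i=\sum_{\ell\in\Sigma_i}A_\ell$ and translate positive products back into reset words using unique continuation plus condition~(2). All of this is the paper's argument.

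The part that goes wrong is your patch for primitivity. You claim that augmenting some $\Sigma_i$ with the identity letter ``turns the corresponding $M_i$ into an irreducible matrix with a loop at every state, hence primitive as a single matrix.'' That inference is false: adding $I$ to a $(0,1)$-matrix places a loop at every vertex of its digraph but does not change reachability, so $M_i+I$ is irreducible (equivalently, primitive, once reflexive) if and only if the digraph of the original $M_i$ was already strongly connected, which condition~(1) does not guarantee. For instance, a single block $\Sigma_1$ built from the choice sets $S_1=\{1,3\}$, $S_2=\{2,3\}$, $S_3=S_4=\{3,4\}$ on four states gives a complete automaton in $\mathscr{C}$ that synchronizes in one step, yet $M_1+I$ is block upper-triangular and reducible. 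Your subsequent step also does not go through as written: the ``positive-row product obtained by the transposed construction'' is a product over $\mathcal{M}^T$, not over $\mathcal{M}$, so it cannot simply be concatenated with the positive-column product of $\mathcal{M}$. To close the lower bound one really needs an argument that the automata $\mathrsfs{A}\in\mathscr{C}$ witnessing $\rt_{\mathscr{C}}(n)$ can be taken strongly connected (so that $\sum_i M_i$ is irreducible, and then synchronization of $\mathrsfs{A}$ rules out a common permutation-partition, giving primitivity via Protasov--Voynov), or some equivalent normalization; the identity-letter trick alone does not supply this. For the record, the paper itself does not spell this step out either, so you correctly sensed a genuine subtlety, but the workaround you propose does not repair it.
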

\begin{proof}
In order to show that $\exp_{\mathscr{NZ}}(n)$ is $O(\rt_{\mathscr{C}}(n))$ we will reuse the reduction from primitive sets of matrices to partial automata presented in the first part of theorem~\ref{th:expcar}. Let $\mathcal{M}$ be a primitive set of $\mathscr{NZ}$ matrices. It can be reduced to partial automata $\mathrsfs{A}$ and $\mathrsfs{B}$ such that $\exp(\mathcal{M}) \leq \car(\mathrsfs{A}) + \car(\mathrsfs{B}) + n-1$. Since $\mathcal{M}$ is an $\mathscr{NZ}$ matrix set, we conclude that the automata $\mathrsfs{A}$ and $\mathrsfs{B}$ are complete. Thus, their carefully synchronizing words are ordinary synchronizing words. Furthermore, the automata $\mathrsfs{A}$ and $\mathrsfs{B}$ belong to $\mathscr{C}$. Indeed, every letter of these automata was obtained from a matrix in $\mathcal{M}$ or $\mathcal{M}^T$. In order to obtain the desired partition, we group a pair of letters together if and only if they were derived from the same matrix. It is a straightforward check that both conditions on the partition are satisfied.

For the other direction, given an automaton $\mathrsfs{A} \in \mathscr{C}$ we will construct a set of $\mathscr{NZ}$ matrices $\mathcal{M}$ such that $\rt(\mathrsfs{A}) \leq \exp(\mathcal{M})$. It will imply that $\exp_{\mathscr{NZ}}(n)=\Omega(\rt_{\mathscr{C}}(n))$. Let $\Sigma_1, \ldots, \Sigma_k$ be the partition of the letters of $\mathrsfs{A}$. A set of matrices $\mathcal{M}$ consists of matrices $M_i = \sum_{\ell \in \Sigma_i} A_\ell$, where $A_\ell$ is the adjacency matrix of the letter $\ell$. Clearly, each $M_i$ is an $\mathscr{NZ}$ matrix due to the first property of the partition and the fact that $\mathrsfs{A}$ is complete. The second property ensures that $\rt(\mathrsfs{A}) \leq \exp(\mathcal{M})$, since every primitive word of $\mathcal{M}$ can be transformed into a synchronizing word of $\mathrsfs{A}$ as in the proof of theorem~\ref{th:expcar}.
\end{proof}

\begin{problem}
\label{prob:rtnzrc}
Improve the bounds $O(n^3)$ and $\Omega(n^2)$ on the growth rate of $\rt_{\mathscr{C}}(n)$ and, equivalently, $\exp_{\mathscr{NZ}}(n)$. In particular, is there a constant $K$ such that $\rt_{\mathscr{C}}(n) < K n^2$?
\end{problem}
This problem can be settled in different ways. On one hand, one can show that problem~\ref{prob:rtnzrc} is as hard as the \v{C}ern\'{y} conjecture. There are not many natural problems equivalent to it and the problem of bounding $\exp_{\mathscr{NZ}}(n)$ is a good candidate for this purpose. On the other hand, a quadratic bound for $\exp_{\mathscr{NZ}}(n)$ is clearly of interest by itself.

In the remainder of this subsection we will present an upper bound on the exponent of a set of matrices from a special class. A matrix $M$ has \emph{total support} if every non-zero element $m_{i,j}$ of $M$ lies on a positive diagonal, i.e. for every $i,j \in [1,n]$ such that $m_{i,j}>0$ there exists a permutation $\sigma$ with the following properties: $\sigma(i)=j$ and for every $k \in [1,n]$ we have $m_{k, \sigma (k)} > 0$. The class of matrices with total support received a lot of attention in the past. For example, it appears in the necessary and sufficient condition for the convergence of the classical Sinkhorn-Knopp method for matrix scaling, see~\cite{SK67}. Another characterization is related to a class of doubly stochastic matrices.
A square matrix $M$ is called \emph{doubly stochastic} if the entries are nonnegative and the sum of elements in each row and column is equal to 1.
A matrix $M$ is said to have a \emph{doubly stochastic pattern} if there exists a doubly stochastic matrix $D$ such that for all $i,j$ it holds $D[i,j]>0$ if and only if $M[i,j]>0$. A famous result of Perfect and Mirsky~\cite{PM65} states that a matrix $M$ has total support if and only if $M$ has a doubly stochastic pattern, see~\cite[Theorem 9.2.1]{Bru06} for a modern and much more general treatment of the problem. Now we are ready to state our result.

\begin{theorem}
If each matrix of a primitive set $\mathcal{M}$ has total support, then the exponent of $\mathcal{M}$ is at most $2n^2 -5n + 5$, where $n \times n$ is the size of matrices in $\mathcal{M}$.
\end{theorem}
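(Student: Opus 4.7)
The plan is to combine the reduction of Theorem~\ref{th:expcar} from primitive matrix sets to carefully synchronizing automata with Kari's bound $n^2 - 3n + 3$ on the reset threshold of Eulerian automata, noting that the target quantity factors as $2n^2 - 5n + 5 = 2(n^2 - 3n + 3) + (n-1)$. Applying the reduction to $\mathcal{M}$ yields a partial automaton $\mathrsfs{A}$ whose shortest carefully synchronizing word bounds the length of a matrix product with a positive column, and applying it to $\mathcal{M}^T$ yields $\mathrsfs{B}$ bounding the length of a product with a positive row. Total support implies $\mathscr{NZ}$ and is preserved under transposition (both conditions are equivalent to having a doubly stochastic pattern by the Perfect--Mirsky theorem), so $\mathrsfs{A}$ and $\mathrsfs{B}$ are complete and their careful synchronization thresholds coincide with ordinary reset thresholds.

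The heart of the argument is to equip $\mathrsfs{A}$ with an Eulerian structure coming from total support. For each $M \in \mathcal{M}$, Perfect--Mirsky produces a doubly stochastic matrix $D_M$ with the same support as $M$; by averaging the permutation matrices witnessed by total support, one per nonzero entry, $D_M$ can be chosen to be rational. Let $N$ be a positive integer clearing all denominators of the $D_M[i,j]$. To every letter $\varphi$ of $\mathrsfs{A}$ arising as a section of some $M$, assign the positive integer multiplicity $c_\varphi = N \prod_i D_M[i,\varphi(i)]$. A direct calculation, based on the fact that $D_M$ has unit row sums, gives for a fixed $M$
\[
\sum_{\varphi} c_\varphi P_\varphi \;=\; N\cdot D_M,
\]
so all row and column sums of this summed adjacency matrix equal $N$. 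Summing over $M \in \mathcal{M}$ shows that in the weighted alphabet the in-degree of every state equals $N|\mathcal{M}|$, matching the out-degree, so the weighted automaton is Eulerian in the sense of Kari. Multiplicities do not change the language of words that the automaton can follow, so the reset threshold of the underlying (unweighted) letter set coincides with that of the weighted automaton, and Kari's theorem bounds it by $n^2 - 3n + 3$. The same construction applied to $\mathcal{M}^T$ gives $\rt(\mathrsfs{B}) \leq n^2 - 3n + 3$.

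Substituting into the refined estimate $\exp(\mathcal{M}) \leq \car(\mathrsfs{A}) + \car(\mathrsfs{B}) + (n-1)$ extracted from the proof of Theorem~\ref{th:expcar} then yields
\[
\exp(\mathcal{M}) \;\leq\; 2(n^2 - 3n + 3) + (n-1) \;=\; 2n^2 - 5n + 5,
\]
as required. The main obstacle is the Eulerian construction in the second paragraph: one must carefully exploit the doubly stochastic pattern characterization of total support, verify that the Eulerian property survives summation across several matrices in $\mathcal{M}$, and confirm that Kari's theorem applies verbatim to multi-alphabets where each repeated letter is treated as a separate copy. The remaining ingredients, namely Perfect--Mirsky, the rationality of $D_M$, and the row-sum computation, are routine.
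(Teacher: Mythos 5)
Your proposal follows essentially the same route as the paper: run the reduction of Theorem~\ref{th:expcar} to obtain complete automata $\mathrsfs{A}$ and $\mathrsfs{B}$ from $\mathcal{M}$ and $\mathcal{M}^T$, use Perfect--Mirsky to replace each $M$ by a rational doubly stochastic matrix with the same zero pattern, blow up each letter $\varphi$ into integer multiplicities so that the automaton becomes Eulerian, and then apply Kari's $n^2-3n+3$ bound. The concern you raise about Kari's theorem and repeated letters is a non-issue (an automaton with multiple copies of a letter is still an automaton, and the reset threshold is unchanged), and the paper handles it the same way.

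The one genuine slip is the choice of multiplicity. You set $c_\varphi = N \prod_i D_M[i,\varphi(i)]$ where $N$ clears denominators of $D_M$, but then $c_\varphi$ need not be an integer: clearing a single factor's denominator costs one power of $N$, and there are $n$ factors in the product, so you would need $N^n \prod_i D_M[i,\varphi(i)]$ to guarantee integrality. Equivalently, do as the paper does: replace $D_M$ by the nonnegative integer matrix $N D_M$ with constant row and column sum $h=N$, and take $c_\varphi = \prod_i (N D_M)[i,\varphi(i)]$. With that correction your row-sum computation becomes $\sum_\varphi c_\varphi P_\varphi = h^{n-1}\,(N D_M)$, giving $h^n$ in- and out-edges per state per block $\Sigma_M$; summing over $M$ (with possibly different $h$ per $M$) still balances in- and out-degrees at every state, so the Eulerian property goes through. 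The remainder of your argument, including the appeal to Kari and the final arithmetic $2(n^2-3n+3)+(n-1)=2n^2-5n+5$, is correct and matches the paper.
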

\begin{proof}
We will modify the reduction presented in the first part of theorem~\ref{th:expcar} from a primitive set of matrices $\mathcal{M}$ to partial automata $\mathrsfs{A},\mathrsfs{B}$ in order to prove the statement. By the aforementioned result of Perfect and Mirsky we conclude that for every matrix $M \in \mathcal{M}$ there exists a doubly stochastic matrix $D_1$ such that for every $i,j$ we have $M[i,j]>0$ if and only if $D_1[i,j]>0$. It is not hard to see that we can further assume that $D_1$ has only rational entries. Therefore, there exists $h$ and a matrix $D_M$ with nonnegative integer elements such that the sum of entries in each row and column is equal to $h$, and $M[i,j]>0$ if and only if $D_M[i,j]>0$ for all $i,j$.  We will define the automaton $\mathrsfs{A}$ as follows. For each matrix $M \in \mathcal{M}$ we add a function $\varphi : [1,n] \rightarrow [1,n]$ as a letter of $\mathrsfs{A}$ multiple times. Namely, we treat $\varphi(\cdot)$ as $\prod_{i=1}^{n} D_M[i, \varphi(i)]$ different letters of the automaton $\mathrsfs{A}$. Let $\mathrsfs{B}$ be an automaton obtained from a matrix set $\mathcal{M}^T$ in the same manner. Similarly to the proofs of theorems~\ref{th:expcar} and~\ref{th:nzrcclass} we can conclude that the automata $\mathrsfs{A}$ and $\mathrsfs{B}$ are complete and synchronizing, moreover, $\exp(\mathcal{M}) \leq \rt(\mathrsfs{A}) + \rt(\mathrsfs{B}) + n - 1$.

Now we are going to show that the automaton $\mathrsfs{A}$ is Eulerian, i.e. the in-degree of each state is equal to its out-degree. Let $\Sigma_M$ be the set of letters generated from a matrix $M \in \mathcal{M}$ and $h$ be the row (and column) sum of $D_M$. It is not hard to see that by the definition of $\mathrsfs{A}$ the size of $\Sigma_M$ is $h^n$. Furthermore, the number of letters from $\Sigma_M$ that move a state $i$ to a state $j$ is equal to $D_M[i,j]h^{n-1}$. Thus, the number of incoming edges to $j$ labelled by a letter from $\Sigma_M$ is equal to $\sum_{i}D_M[i,j]h^{n-1}=h^n$, which is equal to the size of $\Sigma_M$. Since the alphabet of $\mathrsfs{A}$ is $\cup_{M \in \mathcal{M}} \Sigma_M$ and for every $\Sigma_M$ the number of incoming and outgoing edges labelled by $\Sigma_M$ coincide, we conclude that the automaton $\mathrsfs{A}$ is Eulerian. The same reasoning allow us to conclude that the automaton $\mathrsfs{B}$ is also Eulerian.
By the famous result of Kari about the reset thresholds of Eulerian automata~\cite{Kari2003Eulerian} we have $\rt(\mathrsfs{A}),\rt(\mathrsfs{B}) \leq n^2 - 3n + 3$. Thus, the exponent of the matrix set $\mathcal{M}$ is at most $2 (n^2 -3n + 3) + n-1 = 2n^2 -5n + 5$.
\end{proof}

\subsection{Computation and approximation of the exponent}
In this subsection we will focus on the problem of computing the exponent of a set of $\mathscr{NZ}$ matrices. Our results rely on the following lemma:
\begin{lemma}
\label{lemma:zeronzrc}
For every synchronizing automaton $\mathrsfs{A}$ with a sink state there exists a set of $\mathscr{NZ}$ matrices $\mathcal{M}$ constructible in polynomial time such that $\exp(\mathcal{M}) = \rt(\mathrsfs{A}) + 1$.
\end{lemma}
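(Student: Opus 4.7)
The plan is to construct $\mathcal{M}$ directly from the adjacency matrices of $\mathrsfs{A}$, modified so that the sink state $s$ acts as a ``broadcaster'' in each matrix. Concretely, for every letter $\ell \in \Sigma$ of $\mathrsfs{A}$, define a matrix $M'_\ell$ by $M'_\ell[i,j] = 1$ whenever $i = s$ or $i \cdot \ell = j$, and $0$ otherwise. Thus $M'_\ell$ agrees with the ordinary adjacency matrix of $\ell$ on every row except row $s$, which is replaced by the all-ones row. Take $\mathcal{M} = \{M'_\ell : \ell \in \Sigma\}$. Each $M'_\ell$ lies in $\mathscr{NZ}$: every non-sink row $i$ contains a $1$ at column $i \cdot \ell$, row $s$ is all ones, and every column $j$ is nonzero through the entry $M'_\ell[s,j]=1$. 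The construction is clearly polynomial.

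To obtain the bound $\exp(\mathcal{M}) \geq \rt(\mathrsfs{A}) + 1$, I would analyze an arbitrary product $M'_w = M'_{\ell_1} \cdots M'_{\ell_t}$ row-by-row, using the orbit notation $i_k = i \cdot (\ell_1 \cdots \ell_k)$. A short induction on $k$ shows that, as long as none of $i_0, \ldots, i_{k-1}$ equals $s$, row $i$ of the partial product $M'_{\ell_1}\cdots M'_{\ell_k}$ is the singleton row $e_{i_k}^\top$. If instead $i_{k_0} = s$ for some $k_0 < t$, then at step $k_0+1$ the multiplication pulls in row $s$ of $M'_{\ell_{k_0+1}}$ (which is all ones), and row $i$ of the partial product remains all ones for the rest of $w$. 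Consequently, $M'_w$ is entrywise positive if and only if, for every state $i$, the orbit visits $s$ at some step $k_0 \leq t-1$; equivalently, the prefix $\ell_1 \cdots \ell_{t-1}$ is a synchronizing word of $\mathrsfs{A}$ (necessarily synchronizing to $s$ since $s$ is a sink). This forces $t - 1 \geq \rt(\mathrsfs{A})$.

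For the matching upper bound, I would take a shortest synchronizing word $w'$ of $\mathrsfs{A}$ and append any single letter to obtain a word $w$ of length $\rt(\mathrsfs{A})+1$. Its prefix $w'$ is synchronizing, so by the row analysis above $M'_w$ is entrywise positive, giving $\exp(\mathcal{M}) \leq \rt(\mathrsfs{A})+1$.

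The only delicate point is the row-by-row induction: because products are computed in the Boolean semiring, one must rule out accidental cross-term ones appearing in row $i$ before the orbit visits $s$. The explicit identity $(M'_{\ell_1}\cdots M'_{\ell_k})[i,j] = M'_{\ell_k}[i_{k-1}, j]$, valid while $i_0, \ldots, i_{k-1} \neq s$, handles this cleanly, since at each such step row $i$ of the running product has a unique nonzero entry to propagate.
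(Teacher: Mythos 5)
Your construction is identical to the paper's (replace row $s$ of each adjacency matrix by the all-ones row, equivalently $M'_\ell = A_\ell + e_s^\top e$), and your orbit-based induction is the same argument the paper phrases via the largest index $h_i$ at which row $i$ of the running product is still a singleton, forced to be $e_s^\top$ by the sink property. The only omission is the degenerate case $n=1$ (which the paper treats separately, though your construction in fact still gives the right answer there); this does not affect the substance of the proof.
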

\begin{proof}
If the number of states of $\mathrsfs{A}$ is equal to 1, then $\mathcal{M}$ can be an arbitrary set of matrices with the exponent equal to 2. For example,
$\mathcal{M} = \{\left( \begin{smallmatrix} 1&1\\ 1&0 \end{smallmatrix} \right)\}$.
In all the other cases we construct the matrix set $\mathcal{M}$ as follows. Since $\mathrsfs{A}$ is synchronizing, it has a unique sink state, which we denote by $s$. Let $\mathcal{M'}$ be the set of adjacency matrices of letters of $\mathrsfs{A}$. The matrix set $\mathcal{M}$ consists of matrices in $\mathcal{M'}$ modified in such a way that the $s$th row of every matrix is positive, i.e. $\mathcal{M} = \{ M' + e_s^Te \mid M' \in \mathcal{M'}\}$, where $e$ stands for a row vector of 1's, and $e_s$ stands for a row vector with the only non-zero entry equal to 1 at position $s$. Since the automaton $\mathrsfs{A}$ is complete, we conclude that every matrix $M \in \mathcal{M}$ has no zero rows. Furthermore, each column of $M$ has a positive element at position $s$. Thus, the set of matrices $\mathcal{M}$ belongs to $\mathscr{NZ}$. Now we will demonstrate that the set $\mathcal{M}$ is primitive. Since the automaton $\mathrsfs{A}$ is synchronizing there exists a product $P'$ of matrices from $\mathcal{M'}$ with a positive column. Due to the fact that for every letter $\ell$ of $\mathrsfs{A}$ one has $s\cdot\ell=s$, we conclude that the only positive entry in the $s$th row of $P'$ is located at position $s$. Thus, the $s$th column of $P'$ is positive. Altering each matrix $M'$ of the product $P'$ to a matrix $M \in \mathcal{M}$ with the property $M' \leq M$ we obtain a product $P$ of matrices from $\mathcal{M}$ with the positive $s$th column. 
Now multiply by any matrix $M \in \mathcal{M}$ to get a positive product $PM$. Thus, the set $\mathcal{M}$ is primitive and $\exp(\mathcal{M}) \leq \rt(\mathrsfs{A}) + 1$.

It remains to show that $\exp(\mathcal{M}) \geq \rt(\mathrsfs{A}) + 1$. Let $P=M_1M_2 \ldots M_m$ be the shortest positive product of matrices in $\mathcal{M}$. Let $P' = M'_1M'_2 \ldots M'_m$ be the corresponding product of matrices in $\mathcal{M'}$, where $M_i$ and $M'_i$ differ only in the $s$th row. For each row $i \neq s$ there exists the largest $h_i$ such that the $i$th row of $P_{h_i}=M_1M_2 \ldots M_{h_i}$ contains a unique positive entry. 
Due to the fact that the $i$th row of $P_{h_i+1}$ has several positive entries and the structure of matrices in $\mathcal{M}$ we conclude that the unique positive entry occupies the position $s$ in the $i$th row of $P_{h_i}$. 
Therefore, the only positive entry in the $i$th row of $P'_{h_i}$ is located at position $s$. Since for every row the value of $h_i$ is strictly less than $m$, we conclude that the product $P'_{m-1}$ has a positive column $s$. It implies that $\rt(\mathrsfs{A}) \leq m-1$. Rearranging, we obtain the desired inequality $\exp(\mathcal{M}) \geq \rt(\mathrsfs{A}) + 1$.
\end{proof}

The computational complexities of problems related to computation of reset thresholds and synchronizing words were extensively studied. Now we will leverage lemma~\ref{lemma:zeronzrc} to easily obtain a large body of results on the computation and approximation of the exponents of $\mathscr{NZ}$ matrix sets. We assume that the reader is familiar with computational complexity theory. All the missing definitions can be found in the book by Arora and Barak~\cite{Aro09}. Recall that $DP$ stands for a class of all languages of the form $L=L_1 \setminus L_2$ with $L_1, L_2 \in NP$. Note, that $DP$ is a superclass of both $NP$ and $coNP$. Thus, the hardness result that we are going to obtain applies for both classes. The class $DP$ is contained in $P^{NP[log]}$ -- the class of problems solvable by a deterministic polynomial-time Turing machine that can use logarithmic number of queries to an oracle for an $NP$-complete problem. It is generally believed that the inclusion is proper. We will denote by $FP^{NP[log]}$ the functional analogue of $P^{NP[log]}$. For a function $f(n) : \mathbb{N} \rightarrow \mathbb{R}$ we say that an algorithm approximates the exponent 
within a factor $f(n)$ if for every set $\mathcal{M}$ of matrices of size $n \times n$ the value $V$ returned by the algorithm satisfies $\exp(\mathcal{M}) \leq V \leq f(n) \exp(\mathcal{M})$.

\begin{theorem}
\label{th:nzrcComplexity}
Given a set $\mathcal{M}$ of three $n \times n$ matrices belonging to $\mathscr{NZ}$ and possibly a positive integer $k$ encoded in binary.
\begin{enumerate}
\item The problem of deciding whether $\exp(\mathcal{M}) \leq k$ is $NP$-complete.
\item The problem of deciding whether $\exp(\mathcal{M}) = k$ is $DP$-complete.
\item The problem of computing $\exp(\mathcal{M})$ is $FP^{NP[\log]}$-complete.
\item For every constant $\varepsilon>0$ it is $NP$-hard to approximate $\exp(\mathcal{M})$ within a factor $n^{1-\varepsilon}$.
\end{enumerate}
\end{theorem}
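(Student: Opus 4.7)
The plan is to use Lemma \ref{lemma:zeronzrc} as a black-box reduction to transfer all four results from the corresponding, well-studied problems for synchronizing automata (with a sink state) to $\mathscr{NZ}$ matrix sets. Since the reduction $\mathrsfs{A} \mapsto \mathcal{M}$ is polynomial-time, preserves the number of letters (so the target alphabet size is preserved modulo a constant), uses an automaton with the same state set, and satisfies $\exp(\mathcal{M}) = \rt(\mathrsfs{A}) + 1$, any hardness or inapproximability result for the reset threshold of an automaton with a sink state immediately yields the analogous result for $\exp$ on $\mathscr{NZ}$ sets. The ``three matrices'' bound in the statement will follow from the fact that each of the hardness results below can be shown for automata over a binary (or ternary) alphabet with a sink state.

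For the membership parts I would proceed as follows. For (1), $NP$ membership follows from the fact that $\exp_{\mathscr{NZ}}(n)$ is polynomially bounded (the $\tfrac{n^3+2n-3}{3}$ bound from \cite{BJO15} recalled at the start of Section 3.1): first check in polynomial time whether $\mathcal{M}$ is primitive using the Protasov--Voynov algorithm; if so, a shortest positive product has polynomial length and serves as a certificate that can be verified in polynomial time. (2) is in $DP$ as the intersection of an $NP$ language ($\exp(\mathcal{M}) \le k$) and a $coNP$ language ($\exp(\mathcal{M}) \ge k$, i.e.\ no product of length $<k$ is positive). (3) lies in $FP^{NP[\log]}$ via binary search on $k \in [1,\mathrm{poly}(n)]$ using the NP oracle from (1), requiring $O(\log n)$ queries.

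For the hardness parts, the lemma converts a synchronizing automaton $\mathrsfs{A}$ with a sink state into an $\mathscr{NZ}$ set $\mathcal{M}$ with the same number of letters, keeping $n$ essentially unchanged and shifting the parameter by exactly $1$. I would invoke the following known results for the reset threshold, noting that each of them has a formulation or a minor modification that applies to automata with a sink state over a small alphabet: Eppstein's $NP$-completeness of the short-synchronizing-word problem yields (1); the $DP$-completeness of the exact-reset-threshold problem (Olschewski--Ummels) yields (2); the $FP^{NP[\log]}$-completeness of computing the reset threshold (Olschewski--Ummels) yields (3); and the inapproximability within $n^{1-\varepsilon}$ due to Gawrychowski--Straszak yields (4), because the additive $+1$ and the multiplicative change in $n$ are absorbed into the $n^{1-\varepsilon}$ factor.

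The main obstacle, and essentially the only nontrivial point, is the verification that each of these four hardness results actually holds under the two extra restrictions imposed by the lemma: the source automaton must have a sink state, and it must have only two (or three) letters. For Eppstein's and the Olschewski--Ummels constructions this is standard (or requires a cosmetic modification: add a fresh sink state that every existing letter maps to itself, and ensure a positive product of the original automaton becomes a positive column at the sink state). For Gawrychowski--Straszak one has to inspect the reduction and confirm that it can be realized over a constant-size alphabet with a sink state without damaging the approximation gap; the alphabet size of $3$ in the theorem statement reflects precisely this constraint.
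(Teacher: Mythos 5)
Your proposal matches the paper's proof essentially point for point: membership via the known cubic bound on $\exp_{\mathscr{NZ}}(n)$ and a guess-and-check / binary-search argument, and hardness by pushing the known results of Eppstein, Olschewski--Ummels, and Gawrychowski--Straszak through Lemma~\ref{lemma:zeronzrc}, observing that all the relevant reductions produce sink-state automata over two (resp.\ three) letters. One small caveat: the fallback ``cosmetic modification'' you sketch --- adding a fresh sink state that every existing letter fixes --- would not work as stated, since if no other state maps into the new sink then the automaton is no longer synchronizing; fortunately this fallback is unnecessary, as the cited constructions already have a sink state, which is exactly what the paper relies on.
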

\begin{proof}
The hardness results for these problems follow from the corresponding statements about the reset thresholds of synchronizing automata with a sink state and lemma~\ref{lemma:zeronzrc}. Eppstein~\cite{Ep1990} proved that it is $NP$-hard to decide whether $\rt(\mathrsfs{A}) \leq k$, where $k$ is given in binary. Olschewski and Ummels~\cite[Theorem 1]{OM2010} shown that it is $DP$-hard to decide whether $\rt(\mathrsfs{A}) = k$. The same authors~\cite[Theorem 4]{OM2010} also proved that the problem of computing $\rt(\mathrsfs{A})$ is $FP^{NP[\log]}$-hard. A recent breakthrough by Gawrychowski and Straszak~\cite[Theorem 16]{GawrychowskiStraszak2015StrongInapproximabilityOfTheShortestResetWord} states that for every constant $\varepsilon>0$ it is $NP$-hard to approximate $\rt(\mathrsfs{A})$ within a factor $n^{1-\varepsilon}$. Moreover, only automata with a sink state were utilized in all reductions presented by the aforementioned authors. In the proof presented by Gawrychowski and Straszak~\cite[Theorem 16]{GawrychowskiStraszak2015StrongInapproximabilityOfTheShortestResetWord} the number of letters $\mathrsfs{A}$ is equal to three, while in all the other reductions automata with only two-letters were utilized.

It remains to show that the stated problems belong to the corresponding classes:
\begin{enumerate}
\item Since $\exp(\mathcal{M})$ is bounded by a low-degree polynomial $p(n)$, e.g. $p(n)=\frac{n^3 +2n -3}{3}$ by~\cite[Corollary 18]{BJO15}, it suffices to guess a positive product of matrices from $\mathcal{M}$ of length $\min\{ p(n), k\}$. Verification for every such product can be done in polynomial time. Thus, the problem belongs to $NP$.
\item It is easy to see that $\exp{\mathcal{M}}=k$ if and only if $\exp{\mathcal{M}} \leq k$ and $\exp{\mathcal{M}} \leq k-1$ does not hold. Since the problem of deciding whether $\exp(\mathcal{M}) \leq k$ belongs to $NP$, we conclude that the problem of deciding whether $\exp(\mathcal{M}) = k$ belongs to $DP$.
\item The algorithm from $FP^{NP[\log]}$ that computes $\exp(\mathcal{M})$ is a simple binary search algorithm that uses logarithmic number of queries to the oracle for the $NP$-complete problem $\exp(\mathcal{M}) \leq k$. Recall that the exponent of $\mathcal{M}$ is bounded by a polynomial $p(n)=\frac{n^3 +2n -3}{3}$. Thus, in $\log p(n)$ iterations we can establish precise value of $\exp(\mathcal{M})$.
\end{enumerate}
\end{proof}

\section{Conclusion}
The goal of our work was to emphasize, and leverage, the fact that several problems about primitive sets of matrices are in some sense equivalent to problems about synchronizing automata. More precisely, we related the bounds on the exponents and the lengths of the shortest carefully synchronizing words in the general case, and the exponents and reset thresholds in the case of matrices without zero rows and columns. Furthermore, we utilized these connections to easily establish the exact complexity classes of different problems concerning the computation of the exponent of a set of matrices belonging to $\mathscr{NZ}$. Thus, we believe that the joint research effort on both topics at the same time can lead to substantial progress on some of the most desperate problems in both fields at the same time. We left a quadratic upper bound on the exponent of an $\mathscr{NZ}$ matrix set as an open problem, and whether its existence brings any implication for the \v{C}ern\'{y} conjecture. Our future work includes the search for matrix counterparts of special classes of automata, which have quadratically bounded reset threshold. These statements will be analogous to our result on primitive sets of matrices having total support.

\subparagraph*{Acknowledgements}
This work was supported by Interuniversity Attraction Poles (IAP) Programme, and by the ARC grant 13/18-054 (Communaut\'e fran\c{c}aise de Belgique). B.~Gerencs\'er was also supported by the Hungarian Academy of Sciences. V.~Gusev benefited from the Russian foundation for basic research (grant 16-01-00795), Ministry of Education and Science of the Russian Federation (project no. 1.1999.2014/K), and the Competitiveness Program of Ural Federal University. R.~Jungers is a F.R.S./F.N.R.S. research associate.

\bibliography{synchronization}

\begin{thebibliography}{10}

\bibitem{AGV2013}
D.~S. Ananichev, M.~V. Volkov, and V.~V. Gusev.
\newblock {Primitive digraphs with large exponents and slowly synchronizing
  automata}.
\newblock {\em Journal of Mathematical Sciences}, 192(3):263--278, 2013.

\bibitem{ArCaSt15}
J.~Ara\'{u}jo, P.~J. Cameron, and B.~Steinberg.
\newblock Between primitive and 2-transitive: Synchronization and its friends.
\newblock {\em CoRR}, abs/1511.03184, 2015.

\bibitem{Aro09}
S.~Arora and B.~Barak.
\newblock {\em Computational Complexity: A Modern Approach}.
\newblock Cambridge University Press, New York, NY, USA, 1st edition, 2009.

\bibitem{BJO15}
V.~D. Blondel, R.~M. Jungers, and A.~Olshevsky.
\newblock On primitivity of sets of matrices.
\newblock {\em Automatica}, 61(C):80--88, November 2015.

\bibitem{Bru06}
R.~A. Brualdi.
\newblock {\em Combinatorial Matrix Classes}.
\newblock Cambridge University Press, 2006.
\newblock Cambridge Books Online.

\bibitem{Cerny1964}
J.~{\v{C}ern\'{y}}.
\newblock {Pozn\'{a}mka k homog\'{e}nnym experimentom s kone\v{c}n\'{y}mi
  automatmi}.
\newblock {\em {Matematicko-fyzik\'alny \v{C}asopis Slovenskej Akad\'emie
  Vied}}, 14(3):208--216, 1964.
\newblock In Slovak.

\bibitem{CernyPirickaRosenauerova1971}
J.~{\v{C}ern{\'y}}, A.~{Pirick\'a}, and B.~{Rosenauerov\'a}.
\newblock {On directable automata}.
\newblock {\em Kybernetica}, 7:289--298, 1971.

\bibitem{Ep1990}
D.~Eppstein.
\newblock {Reset sequences for monotonic automata}.
\newblock {\em SIAM Journal on Computing}, 19:500--510, 1990.

\bibitem{Fr1982}
P.~Frankl.
\newblock {An extremal problem for two families of sets}.
\newblock {\em European Journal of Combinatorics}, 3:125--127, 1982.

\bibitem{GawrychowskiStraszak2015StrongInapproximabilityOfTheShortestResetWord}
P.~Gawrychowski and D.~Straszak.
\newblock Strong inapproximability of the shortest reset word.
\newblock In {\em Mathematical Foundations of Computer Science}, volume 9234 of
  {\em LNCS}, pages 243--255. Springer, 2015.

\bibitem{Ivan2009}
Z.~Gazdag, S.~Iv\'{a}n, and J.~Nagy-Gy\"{o}rgy.
\newblock Improved upper bounds on synchronizing nondeterministic automata.
\newblock {\em Inf. Process. Lett.}, 109(17):986--990, August 2009.

\bibitem{GoNo12}
E.~Goles and M.~Noual.
\newblock Disjunctive networks and update schedules.
\newblock {\em Advances in Applied Mathematics}, 48(5):646 -- 662, 2012.

\bibitem{GoJuTr15}
F.~Gonze, R.~Jungers, and A.~Trahtman.
\newblock A note on a recent attempt to improve the pin-frankl bound.
\newblock {\em Discrete Mathematics \& Theoretical Computer Science}, 17(1),
  2015.

\bibitem{Hart02}
D.~J. Hartfiel.
\newblock {\em Nonhomogeneous Matrix Products}.
\newblock World Scientific, 2002.

\bibitem{HoJo}
R.~A. Horn and C.~R. Johnson.
\newblock {\em Matrix analysis}.
\newblock Cambridge University Press, 1995.

\bibitem{ImSt99}
B.~Imreh and M.~Steinby.
\newblock Directable nondeterministic automata.
\newblock {\em Acta Cybern.}, 14(1):105--115, February 1999.

\bibitem{Kari2003Eulerian}
J.~Kari.
\newblock {Synchronizing finite automata on Eulerian digraphs}.
\newblock {\em Theoretical Computer Science}, 295(1-3):223--232, 2003.

\bibitem{KRS1987}
A.~A. Klyachko, I.~K. Rystsov, and M.~A. Spivak.
\newblock {An extremal combinatorial problem associated with the bound on the
  length of a synchronizing word in an automaton}.
\newblock {\em Cybernetics}, 23(2):165--171, 1987.

\bibitem{Log10}
D.~O. Logofet.
\newblock Markov chains as succession models: new perspectives of the classic
  paradigm.
\newblock {\em Lesovedenie}, 2(3):46--59, 2010.

\bibitem{Mart2010}
P.~V. Martyugin.
\newblock A lower bound for the length of the shortest carefully synchronizing
  words.
\newblock {\em Russian Mathematics}, 54(1):46--54, 2010.

\bibitem{Meyer2000}
C.~D. Meyer.
\newblock {\em Matrix Analysis and Applied Linear Algebra}.
\newblock Society for Industrial and Applied Mathematics, Philadelphia, PA,
  USA, 2000.

\bibitem{Na1986}
B.~K. Natarajan.
\newblock {An algorithmic approach to the automated design of parts orienters}.
\newblock In {\em Foundations of Computer Science, 27th Annual Symposium on},
  pages 132--142, 1986.

\bibitem{OM2010}
J.~Olschewski and M.~Ummels.
\newblock {The complexity of finding reset words in finite automata}.
\newblock In {\em Mathematical Foundations of Computer Science}, volume 6281 of
  {\em LNCS}, pages 568--579. Springer, 2010.

\bibitem{PM65}
H.~Perfect and L.~Mirsky.
\newblock The distribution of positive elements in doubly-stochastic matrices.
\newblock {\em Journal of the London Mathematical Society}, s1-40(1):689--698,
  1965.

\bibitem{Pin1983OnTwoCombinatorialProblems}
J.-E. Pin.
\newblock {On two combinatorial problems arising from automata theory}.
\newblock In {\em Proceedings of the International Colloquium on Graph Theory
  and Combinatorics}, volume~75 of {\em North-Holland Mathematics Studies},
  pages 535--548, 1983.

\bibitem{PrVo2012}
V.~Yu. Protasov and A.~S. Voynov.
\newblock Sets of nonnegative matrices without positive products.
\newblock {\em Linear Algebra and its Applications}, 437(3):749 -- 765, 2012.

\bibitem{SK67}
R.~Sinkhorn and P.~Knopp.
\newblock Concerning nonnegative matrices and doubly stochastic matrices.
\newblock {\em Pacific J. Math.}, 21(2):343--348, 1967.

\bibitem{Volkov2008Survey}
M.~V. Volkov.
\newblock Synchronizing automata and the \u{C}ern\'{y} conjecture.
\newblock In {\em Language and Automata Theory and Applications}, volume 5196
  of {\em LNCS}, pages 11--27. Springer, 2008.

\bibitem{Voy13}
A.~S. Voynov.
\newblock Shortest positive products of nonnegative matrices.
\newblock {\em Linear Algebra and its Applications}, 439(6):1627 -- 1634, 2013.

\end{thebibliography}

\end{document}